\newcommand{\setword}[2]{%
  \phantomsection
  #1\def\@currentlabel{\unexpanded{#1}}\label{#2}%
}
\newcommand{\Tr}{\operatorname{Tr}}
\newcommand{\be}{\begin{equation}}
\newcommand{\ee}{\end{equation}}
\newcommand{\ba}{\begin{eqnarray}}
\newcommand{\ea}{\end{eqnarray}}
\newcommand{\ketbra}[2]{|#1\rangle \langle #2|}
\newcommand{\tr}{\operatorname{Tr}}
\newcommand{\proj}[1]{\ket{#1}\bra{#1}}
\newtheorem{theorem}{Theorem}
\newtheorem{corollary}{Corollary}
\newtheorem{proposition}{Proposition}
\newtheorem{observation}{Observation}
\newtheorem{remark}{Remark}
\def\>{\rangle}
\def\<{\langle}
\newenvironment{manualtheorem}[1]{%
  \IfBlankTF{#1}
    {}
    {}%
  \manualtheoreminner
}{\endmanualtheoreminner}
\newcommand{\innerthmname}{}
\theoremstyle{definition}
\newenvironment{upstatement}[1]
 {\renewcommand{\innerthmname}{#1}\innerdfn}
 {\endinnerdfn}
\newenvironment{manuallemma}[1]{%
  \IfBlankTF{#1}
    {}
    {}%
  \manuallemmainner
}{\endmanuallemmainner}
\providecommand{\ket}[1]{| #1{\rangle}}
\providecommand{\bra}[1]{\langle #1|}
\providecommand{\proj}[1]{\ket{#1}\bra{#1}}
\providecommand{\ra}{\mathbf{r}_1}
\providecommand{\rb}{\mathbf{r}_2}
\providecommand{\bsa}{\boldsymbol{\sigma}_1}
\providecommand{\bsb}{\boldsymbol{\sigma}_2}
\providecommand{\bT}{\boldsymbol{T}}
\providecommand{\bR}{\boldsymbol{R}}
\providecommand{\bn}{\boldsymbol{n}}
\providecommand{\id}{\mathbf{I}}
\providecommand{\transp}[1]{#1^\intercal}
\providecommand{\trace}[1]{{\sf Tr}\left[#1\right]}
\providecommand{\bTT}{\boldsymbol{\Tilde{T}}}
\begin{document}

\title{Ergodiscord: An Operational and Distinct Notion of Quantumness of Correlations}

\author{Mir Alimuddin}
\affiliation{ICFO-Institut de Ciencies Fotoniques, The Barcelona Institute of Science and Technology, Av. Carl Friedrich Gauss 3, 08860 Castelldefels (Barcelona), Spain.}

\author{Snehasish Roy Chowdhury}
\affiliation{Physics and Applied Mathematics Unit, Indian Statistical Institute, 203 B.T. Road, Kolkata 700108}

\author{Ram Krishna Patra}
\affiliation{Department of Physics of Complex Systems, S. N. Bose National Center for Basic Sciences, Block JD, Sector III, Salt Lake, Kolkata 700106, India.}

\author{Subhendu B. Ghosh}
\affiliation{Department of Physics of Complex Systems, S. N. Bose National Center for Basic Sciences, Block JD, Sector III, Salt Lake, Kolkata 700106, India.}

\author{Tommaso Tufarelli}
\affiliation{School of Mathematical Sciences, University of Nottingham, University Park, Nottingham, NG7 2RD, United Kingdom.}

\author{Gerardo Adesso}
\affiliation{School of Mathematical Sciences, University of Nottingham, University Park, Nottingham, NG7 2RD, United Kingdom.}

\author{Manik Banik}
\affiliation{Department of Physics of Complex Systems, S. N. Bose National Center for Basic Sciences, Block JD, Sector III, Salt Lake, Kolkata 700106, India.}

\begin{abstract}
Nonclassicality in composite quantum systems depicts several puzzling manifestations, with Einstein-Podolsky-Rosen entanglement, Schr\"odinger steering, and Bell nonlocality being the most celebrated ones. In addition to those, an unentangled quantum state can also exhibit nonclassicality, as evidenced from notions such as quantum discord and work deficit. In this work, we present a general framework for exploring quantumness of correlations in multipartite quantum states. By exploiting the different signatures reflected on observable quantities depending on whether subsystems of a composite systems are probed jointly or independently, we introduce an operational quantifier of nonclassicality, termed {\em ergodiscord}. As we show, this newly proposed quantifier faithfully captures nonclassicality in any bipartite quantum state, while being fundamentally distinct from the original quantum discord. 
Moreover, ergodiscord uncovers an intriguing phenomenon called `nonlocal energy locking', where a useful form of energy (i.e. work) gets locked in correlations of nonclassical states. We also show that a mixed nonclassical state can lock more work than the maximally entangled state of the corresponding system, establishing an interesting super-additivity phenomenon of nonlocal energy locking. The present work may inspire novel designs of quantum energy storage devices by utilizing nonclassical correlations in composite quantum systems.
\end{abstract}


\maketitle	
\section{Introduction}
 Correlations are central to scientific inquiry across many disciplines, with their systematic study playing a key role in advancing our understanding of the physical world \cite{Pearl2009}. This is especially evident in the study of quantum systems comprising multiple subsystems. The intriguing nature of quantum correlations, as exemplified by entanglement and steering, has been recognized since the early foundations of quantum theory \cite{Einstein1935,Bohr1935,Schrdinger1935,*Schrdinger1936}. Building on these ideas, J. S. Bell, in his landmark work, extended the concept of quantum entanglement to establish quantum nonlocality \cite{Bell1964,*Bell1966} (see also \cite{Mermin1993}). Alongside their role in defining fundamental notions of nonclassicality \cite{Werner1989,Barrett2002,Wiseman2007}, quantum entanglement, steering, and nonlocality have found numerous applications in the rapidly advancing field of quantum technologies \cite{Horodecki2009,Guhne2009,Brunner2014,Uola2020}. More recently, it has been shown that quantum discord captures nonclassicality even in unentangled states \cite{Zurek2000,Ollivier2001,Henderson2001} (see also \cite{Modi2012,Adesso2016}). Consequently, the investigation of nonclassical correlations, with a focus on their identification, quantification, and classification, has become a central objective in quantum information research over the past few decades.  

In this paper, we adopt an operational approach to investigate nonclassical correlations in multipartite quantum systems. We start by noting that such systems can be probed differently depending on the resources available among the spatially separated subsystems. Global probing requires quantum communication channels among the different sub-parts, whereas in their absence, only local probing is feasible, where each subsystem is addressed individually. Allowing only classical communication enables an intermediate probing regime between the global and local cases (see Fig.~\ref{fig1}). The distinct observable outcomes resulting from these different probing strategies provide a foundation for defining and quantifying nonclassical correlations. This operational framework offers a versatile methodology to introduce various quantifiers of nonclassicality. While this approach can be built on resource theories leading to entropic measures, our framework also enables the construction of alternative quantifiers using expectation values of observable quantities. In particular, we propose a novel thermodynamic-based quantity, called `ergodiscord', which involves the works extractable from a multipartite system under reversible processes, that in general can be further constrained due to the availability of resources to implement them. As we show, for bipartite systems ergodiscord faithfully characterizes nonclassical correlations, vanishing only for states entirely devoid of such correlations. Our analysis also uncovers an operational phenomenon termed `nonlocal energy locking', wherein extractable work from an isolated multipartite quantum system becomes locked within its nonclassical correlations. We derive exact expressions for ergodiscord in generic two-qubit states, demonstrating its distinctness from quantum discord \cite{Zurek2000,Ollivier2001,Henderson2001} and suggesting that a resource theory of ergodiscord would differ qualitatively from that of quantum discord \cite{Modi2012,Adesso2016}. Finally, we discuss how this study can contribute to the design of efficient energy storage devices by exploiting the quantumness of correlations in multipartite states, with potential applications in quantum batteries \cite{Ferraro2018,Andolina2019,Yang2023}.

\section{Preliminaries}
 The state of an individual quantum system ($A$) is described by a density operator $\rho\in\mathcal{D}(\mathcal{H}_A)$. For a finite-dimensional system, which we will assume throughout this work, $\mathcal{H}_A$ is isomorphic to some complex Euclidean space $\mathbb{C}^{d_A}$, where $d_A$ denotes the dimension of the Hilbert space. The state of a composite system ${\bf A}\equiv A_1\cdots A_n$ is called separable if it is a statistical mixture of independently prepared constituent sub-parts, i.e., $\rho_{\bf A}=\sum_ip_i(\otimes_{j=1}^n\psi^i_j)$, where $\vec{p}:=\{p_i\}$ is a probability vector, and $\psi:=\ket{\psi}\bra{\psi}$ with $\ket{\psi^i}_j\in\mathbb{C}^{d_j}$. States that are not separable are called entangled, and they capture the most celebrated aspect of nonclassical correlation in composite quantum systems \cite{Horodecki2009,Guhne2009}. Importantly, a separable state can also accommodate nonclassical aspect recognized as quantum discord \cite{Zurek2000,Ollivier2001,Henderson2001}. A multipartite state is called classically correlated if it allows a decomposition of the form $\rho_{\bf A}=\sum p_{i_1\cdots i_n}\psi^{i_1}_1\otimes\cdots\otimes\psi^{i_n}_n$, where $\{\ket{\psi^{i_j}}_j\}$ forms an orthonormal basis (ONB) for the $j^{th}$ Hilbert space, and $\{p_{i_1\cdots i_n}\}$ denotes a probability vector. A function $\mathbb{F}:\mathcal{D}(\otimes_{j=1}^n\mathbb{C}^{d_j}_{A_j})\mapsto\mathbb{R}$ is said to capture the signature of nonclassical correlation faithfully if it takes value zero for all classically correlated states and nonzero otherwise. For the bipartite case, the entropic measure of quantum discord, defined in terms of the difference between two different expressions for the mutual information \cite{Zurek2000,Ollivier2001,Henderson2001}, is an example of a faithful quantifier of nonclassical correlation \cite{Adesso2016}. Later, this notion has been generalized to quantum systems involving more than two sub-parts, albeit with caveats in its operational interpretation \cite{Rulli2011,Xu2012,Radhakrishnan2020}. Notably, this notion of nonclassicality is party asymmetric, as a state can be a statistical mixture of product states, where $\{\ket{\psi^{i_j}}\}$ form ONBs only for some sub-parts, but not for all. In the following, we develop a general operational framework for studying nonclassical correlations in multipartite quantum states.

\section{Framework}
Any physical operation acting on a quantum system is described by a completely positive trace-preserving (CPTP) map \cite{Kraus1983}. Given a quantum system $A$, we start by defining a generic function $f_A:\mathcal{D}(\mathbb{C}^{d_A})\mapsto\mathbb{R}$. For a CPTP map $\phi_A:\mathcal{D}(\mathbb{C}^{d_A})\mapsto\mathcal{D}(\mathbb{C}^{d_A})$, we define the variation of the function $f_A$ on some state $\rho_A$ as $\Delta f_A^{\phi_A}(\rho_A):=f_A(\rho_A)-f_A(\phi(\rho_A))$. For a set of CPTP maps $\Phi_A$, the variation is given by $\Delta f_A^{\Phi_A}(\rho_A):=\max_{\phi_A\in\Phi_A}\Delta f_A^{\phi_A}(\rho_A)$, which quantifies the maximum extracted value of the function $f_A$ on the given state $\rho_A$ under the set of operations $\Phi_A$. For a multipartite quantum system ${\bf A}(=A_1\cdots A_n)$, fixing $(f_j,\Phi_j)$ pair for the $j^{th}$ party $j\in\{1,\cdots,n\}\equiv[n]$, and fixing $(f_{\bf A},\Phi_{\bf A})$ pair for the joint system, we define another function $\mathbb{F}_{GL}(\rho_{\bf A}):=\Delta f^{\Phi_{\bf A}}_{\bf A}(\rho_{\bf A})-\sum_j\Delta f^{\Phi_j}_{j}(\rho_j)$, where $\rho_j:=\tr_{A_j^c}(\rho_{\bf A})$ denotes the marginal state of the $j^{th}$ sub-part. The function $\mathbb{F}_{GL}$ quantifies the difference between the extracted value of the global function and that of the local functions evaluated on the global state and the local sub-parts under the fixed sets of operations. Interestingly, as we will see, for suitable choices of $(f_\gamma,\Phi_\gamma)$'s, with $\gamma\in\{[n]\cup{\bf A}\}$, the function $\mathbb{F}_{GL}$ can capture the signature of different forms of correlation within the multipartite state. As an example, for bipartite systems, the following observation is noteworthy (see Appendix I for more details).
\begin{figure}[t!]
\centering
\includegraphics[scale=0.45]{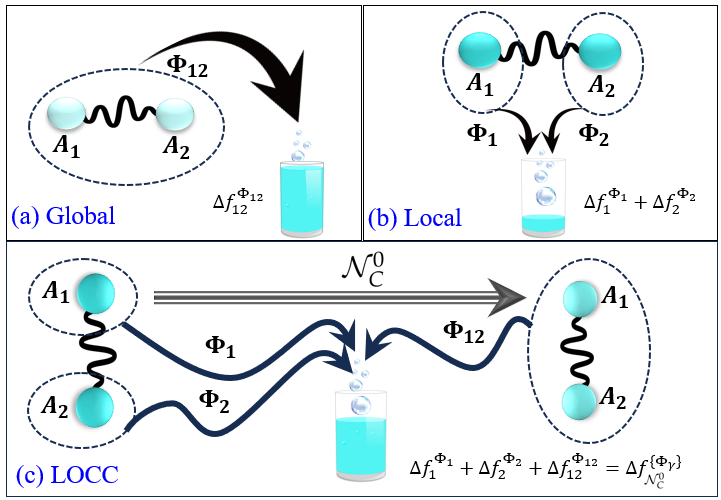}
\caption{Resource extraction from a bipartite quantum state $\rho_{\bf A}\in\mathcal{D}(\mathbb{C}^{d_1}\otimes\mathbb{C}^{d_2})$; here ${\bf A}=A_1A_2$. (a) $\Delta f^{\Phi_{12}}_{12}(\rho_{\bf A})$ quantifies the amount of resource extracted under the set of global operations $\Phi_{12}$. (b) $\sum_j\Delta f^{\Phi_j}_j(\rho_{j})$ is the amount of resource extracted under local operations $(\Phi_1,\Phi_2)$. (c) $\Delta f^{\{\Phi_{\gamma}\}}_{\mathcal{N}^0_C}(\rho_{\bf A})$ is the amount of resource extracted under $(\Phi_1,\Phi_2,\Phi_{12})$ when from the sub-part $A_1$ to $A_2$ free classical channel $\mathcal{N}^0_C$ are available. The quantity $\mathbb{F}_{G\mathcal{N}^0_C}(\rho_{\bf A}):=\Delta f^{\Phi_{12}}_{12}(\rho_{\bf A})-\Delta f^{\{\Phi_{\gamma}\}}_{\mathcal{N}^0_C}(\rho_{\bf A})$ amounts to 'nonlocal locking of the resource', and can capture the signature of nonclassical correlation in state $\rho_{\bf A}$.}
\label{fig1}\vspace{-.3cm}
\end{figure}
\begin{observation}\label{obs1}
The choices $(f_\gamma,\Phi_\gamma)\equiv(f_\gamma^{pe},\Phi_\gamma^{no}),~\forall~\gamma\in\{[2]\cup{\bf A}\}$ lead to $\mathbb{F}^{pe}_{GL}(\rho_{\bf A})=S(\rho_1)+S(\rho_2)-S(\rho_{\bf A}):=I(A_1:A_2)$; where $S(\rho_\gamma):=-\mathrm{tr}(\rho_\gamma\log\rho_\gamma)$, $f^{pe}_\gamma(\rho_\gamma):=\log d_\gamma-S(\rho_\gamma)$, and $\Phi^{no}_\gamma$ is the set of noisy operations.
\end{observation}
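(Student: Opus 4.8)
The plan is to evaluate the three noisy-operation variations entering $\mathbb{F}^{pe}_{GL}$ individually and then combine them. I would first note that, for any single CPTP map $\phi_\gamma$, the additive constant $\log d_\gamma$ in $f^{pe}_\gamma$ drops out of the variation, leaving
\be
\Delta f^{pe,\phi_\gamma}_\gamma(\rho_\gamma)=f^{pe}_\gamma(\rho_\gamma)-f^{pe}_\gamma(\phi_\gamma(\rho_\gamma))=S(\phi_\gamma(\rho_\gamma))-S(\rho_\gamma).
\ee
Since the input entropy $S(\rho_\gamma)$ is fixed, maximizing over the set $\Phi^{no}_\gamma$ of noisy operations amounts to maximizing the output entropy $S(\phi_\gamma(\rho_\gamma))$.

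The crux is therefore to establish that $\max_{\phi_\gamma\in\Phi^{no}_\gamma}S(\phi_\gamma(\rho_\gamma))=\log d_\gamma$ for every $\rho_\gamma$. The upper bound is immediate, as the von Neumann entropy of any state on $\mathbb{C}^{d_\gamma}$ is at most $\log d_\gamma$. For the matching lower bound I would exhibit an explicit noisy operation saturating it: recalling that a noisy operation appends an ancilla in the maximally mixed state, applies a global unitary, and discards a subsystem, I would take an ancilla $E$ of dimension $d_\gamma$ prepared in $\mathbb{I}_E/d_\gamma$, apply the swap unitary between $\gamma$ and $E$, and trace out the register carrying the original $\rho_\gamma$. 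The surviving register is in the state $\mathbb{I}/d_\gamma$ independently of the input, so this completely randomizing map lies in $\Phi^{no}_\gamma$ and attains entropy $\log d_\gamma$. Consequently
\be
\Delta f^{pe,\Phi^{no}_\gamma}_\gamma(\rho_\gamma)=\log d_\gamma-S(\rho_\gamma)=f^{pe}_\gamma(\rho_\gamma).
\ee

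With this identity in hand for each $\gamma\in\{[2]\cup{\bf A}\}$, the remainder is bookkeeping. Substituting into the definition of $\mathbb{F}_{GL}$ and using $d_{\bf A}=d_1 d_2$ gives
\be
\mathbb{F}^{pe}_{GL}(\rho_{\bf A})=\big(\log(d_1d_2)-S(\rho_{\bf A})\big)-\sum_{j=1}^2\big(\log d_j-S(\rho_j)\big).
\ee
The logarithmic contributions cancel because $\log(d_1 d_2)=\log d_1+\log d_2$, leaving $S(\rho_1)+S(\rho_2)-S(\rho_{\bf A})=I(A_1:A_2)$, as claimed.

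The only genuinely nontrivial step is the achievability half of the entropy maximization: one must verify that complete randomization to the maximally mixed state is realizable within the restricted class of noisy operations, not merely within the broader class of unital channels. The swap-with-maximally-mixed-ancilla construction resolves this cleanly, after which every remaining manipulation is purely algebraic.
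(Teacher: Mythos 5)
Your argument is correct and follows essentially the same route as the paper's Appendix I: both reduce the variation to maximizing the output entropy over noisy operations and observe that the completely mixed state is reachable, after which the identity $\Delta f^{pe,\Phi^{no}_\gamma}_\gamma(\rho_\gamma)=f^{pe}_\gamma(\rho_\gamma)$ and the cancellation of the $\log d$ terms give $I(A_1:A_2)$. Your explicit swap-with-maximally-mixed-ancilla construction merely spells out the achievability step that the paper asserts in one line, so there is no substantive difference.
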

Notably, the function $f^{pe}_\gamma(\rho_\gamma)$ denotes the asymptotic rate of purity extraction from the state $\rho_\gamma$ under noisy operations \cite{Horodecki2003(1)}. Accordingly, $\mathbb{F}^{pe}_{GL}$ in Observation \ref{obs1}, quantifies the difference between asymptotic rates of purity extraction under global and local noisy operations \cite{Horodecki2003(2)}, which interestingly finds an exact match with the total correlation of the state $\rho_{\bf A}$, quantified through the quantum mutual information $I(A_1:A_2)$ \cite{Groisman2005}.

At this point it is worth noting how the function $\mathbb{F}_{GL}$ be evaluated operationally for a multipartite state (see Fig.~\ref{fig1}). Firstly, each of the local parties can evaluate the corresponding $\Delta f^{\Phi_j}_j$ locally. On the other hand, for evaluating $\Delta f^{\Phi_{\bf A}}_{\bf A}$, the spatially separated parties can send their respective sub-parts to a common party (say $j^\star$). This however demands a perfect quantum communication channel from all the other parties to the designated party $j^\star$. At this point, one can consider intermediate scenarios where only a subset of channels $\mathcal{N}$ are available from the remaining parties to $j^\star$. In such cases one can define the variation $\Delta f^{\{\Phi_\gamma\}}_{\mathcal{N}}(\rho_{\bf A}):=\max_{\{\phi_\gamma\in\Phi_\gamma,~N\in\mathcal{N}\}}[\sum_{j\in[n]}\Delta f^{\phi_j}_j(\rho_j)+\Delta f^{\phi_{\bf A}}_{\bf A}(N(\sigma_{\bf A}))-\mathcal{C}(N)]$, where $\sigma_{\bf A}:=(\otimes_j\phi_j)(\rho_{\bf A})$, $N(\sigma_{\bf A})$ is the composite state received by $j^\star$, and $\mathcal{C}(N)$ is a suitably defined cost function. This function quantifies the value extraction from a given $\rho_{\bf A}$ over the maps $\{\Phi_{\bf A}\}$'s when collaboration from the remaining parties to $j^\star$ is limited by the set of maps $\mathcal{N}$. To say more elaborately, each of the parties extracts $\Delta f^{\phi_j}_j$ from their respective parts under their local operations resulting the state $\rho_{\bf A}$ to evolve into a new state $\sigma_{\bf A}$. Accordingly, the $j^\star$ party receives the state $N(\sigma_{\bf A})$ and extracts $\Delta f^{\phi_{\bf A}}_{\bf A}$. Clearly, in absence of $\mathcal{N}$, the value of $\Delta f^{\{\Phi_\gamma\}}_{\mathcal{N}}$ should exactly match with $(\sum_{j\in[n]}\Delta f^{\Phi_j}_j)$, whereas with perfect quantum communication channel it should match with $\Delta f^{\Phi_{\bf A}}_{\bf A}$. The cost function $\mathcal{C}(N)$ is defined accordingly to satisfy these demands \cite{Bedingham2016,Faist2018}. At this point, we  define a new function $\mathbb{F}_{G\mathcal{N}}(\rho_{\bf A}):=\Delta f^{\Phi_{\bf A}}_{\bf A}(\rho_{\bf A})-\Delta f^{\{\Phi_\gamma\}}_{\mathcal{N}}(\rho_{\bf A})$ quantifying the difference between optimal global extracted value and the $\mathcal{N}$ channel assisted optimal extracted value, which will be shown to be useful to explore nonclassical correlation in multipartite states.

\section{Resource theoretic quantifiers}
To see application of the aforesaid general framework we start by analyzing how the function $\mathbb{F}_{G\mathcal{N}}$ can capture the notion of quantum discord in bipartite systems (${\bf A}=A_1A_2$) for suitable choices of the tuple $(f_\gamma,\Phi_\gamma,\mathcal{N})$. For that once again we fix $(f_\gamma,\Phi_\gamma)\equiv(f^{pe}_\gamma,\Phi_\gamma^{no}),~\forall~\gamma\in\{1,2,{\bf A}\}$, and fix $\mathcal{N}$ to be classical communication channel between $A_1$ and $A_2$. Recall that quantum discord is a party asymmetric concept: a bipartite state $\rho_{\bf A}\in\mathcal{D}(\mathbb{C}^{d_1}\otimes\mathbb{C}^{d_2})$ having the classical-quantum (CQ) form $\rho_{\bf A}=\sum p_{i_1i_2}\psi^{i_1}_1\otimes\psi^{i_2}_2$, with $\langle\psi^{i_1}_1|\psi^{i^\prime_1}_1\rangle=\delta_{i_ii^\prime_1}$, has discord zero from $A_1\to A_2$  \cite{Zurek2000,Ollivier2001,Henderson2001}. A generic classical communication channel $N\in\mathcal{N}_C$ sending the $A_1$ part to the $A_2$ can be thought as a measure-and-prepare channel \cite{Holevo1998,Rosset2018}, with preparations being restricted to rank-1 orthogonal states (see Appendix III). Such a channel could map completely mixed state to a resourceful state, and therefore is costly from the perspective of purity resource theory. One can bypass this cost association by further limiting these channels to be cost free. As we argue in Appendix III, a cost-free classical channel in this case takes the form $N(\rho_{A_1}):=\sum_{l\in[d_1]}\tr(\pi_l\rho_{1})\ket{\alpha^l_1}\bra{\alpha^l_1}$, where $\langle\alpha^l_1|\alpha^{l^\prime}_1\rangle=\delta_{ll^\prime}$, and $\{\pi_l\}$ is a positive-operator-value-measurement (POVM) with $\tr(\pi_l)=1,~\forall~l\in[d_1]$. By denoting the set of all such zero cost classical channels as $\mathcal{N}^0_C$, we obtain the following result.
\begin{theorem}\label{theo1}
For $(f_\gamma,\Phi_\gamma,\mathcal{N})\equiv(f_\gamma^{pe},\Phi_\gamma^{no},\mathcal{N}^0_C)$ the function $\mathbb{F}^{pe}_{G\mathcal{N}^0_C}$ faithfully captures nonclassical correlation from $A_1\to A_2$ in any bipartite state $\rho_{\bf A}\in\mathcal{D}(\mathbb{C}^{d_1}\otimes\mathbb{C}^{d_2})$ having maximally mixed marginal on $A_1$ sub-part.
\end{theorem}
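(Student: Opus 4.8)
The plan is to evaluate both pieces of $\mathbb{F}^{pe}_{G\mathcal{N}^0_C}=\Delta f^{\Phi_{\bf A}}_{\bf A}-\Delta f^{\{\Phi_\gamma\}}_{\mathcal{N}^0_C}$ in closed form and recognise the result as the one-way quantum discord, whose faithfulness is then inherited. First I would record that for noisy operations the extractable value is the full negentropy: since $\Delta f^{\phi}(\rho)=S(\phi(\rho))-S(\rho)$ and noisy operations are unital and can fully randomise any input, $\max_{\phi\in\Phi^{no}}S(\phi(\rho))=\log d$, so $\Delta f^{\Phi^{no}}(\rho)=\log d-S(\rho)=f^{pe}(\rho)$. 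In particular $\Delta f^{\Phi_{\bf A}}_{\bf A}(\rho_{\bf A})=\log(d_1d_2)-S(\rho_{\bf A})$.

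Next I would simplify the channel-assisted term using $\rho_1=\mathbb{I}/d_1$. Two consequences are central. (i) Every noisy $\phi_1$ fixes the maximally mixed marginal, so $\Delta f^{\phi_1}_1(\rho_1)=0$, and the measurement probabilities obey $q_l=\mathrm{tr}(\pi_l\rho_1)=\mathrm{tr}(\pi_l)/d_1=1/d_1$ for the unit-trace effects defining $\mathcal{N}^0_C$, whence $H(\vec q)=\log d_1$. (ii) Absorbing $\phi_1$ into the effective POVM $\{\phi_1^\dagger(\pi_l)\}$ (still unit trace, as $\phi_1^\dagger$ is trace preserving) lets me set $\phi_1=\mathrm{id}$. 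Writing the post-channel state as the classical-quantum state $\tau_{\bf A}=\sum_l q_l\,\ketbra{\alpha^l_1}{\alpha^l_1}\otimes\eta_l$ and maximising over $\phi_{\bf A}$ gives a contribution $\log(d_1d_2)-S(\tau_{\bf A})$ with $S(\tau_{\bf A})=\log d_1+\sum_l q_l S(\eta_l)$ and $\eta_l=\phi_2(\rho_{2|l})$.

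The one delicate optimisation is over $\phi_2$. Collecting the $\phi_2$-dependent terms yields the Holevo quantity $\chi(\{q_l,\phi_2(\rho_{2|l})\})=S(\phi_2(\rho_2))-\sum_l q_l S(\phi_2(\rho_{2|l}))$, which by data processing (monotonicity of the Holevo information under the common channel $\phi_2$) is maximised at $\phi_2=\mathrm{id}$. This collapses the assisted term to $\Delta f^{\{\Phi_\gamma\}}_{\mathcal{N}^0_C}(\rho_{\bf A})=\log d_2-\min_{\{\pi_l\}}\sum_l q_l S(\rho_{2|l})$, and combining with the global term (using $\log d_1=S(\rho_1)$) gives
\begin{equation}
\mathbb{F}^{pe}_{G\mathcal{N}^0_C}(\rho_{\bf A})=S(\rho_1)+S(\rho_2)-S(\rho_{\bf A})-\max_{\{\pi_l\}}\Big[S(\rho_2)-\textstyle\sum_l q_l S(\rho_{2|l})\Big],
\end{equation}
i.e. exactly the $A_1\to A_2$ quantum discord, with the measurement optimised over the unit-trace POVM class.

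Finally I would establish faithfulness by sandwiching this class between projective (von Neumann) measurements and arbitrary POVMs. Since orthonormal-basis measurements have unit-trace effects and the latter form a subset of all POVMs, the associated classical correlations obey $J_{\mathrm{proj}}\le J_{\mathcal{N}^0_C}\le J_{\mathrm{POVM}}\le I(A_1{:}A_2)$, whence $\mathbb{F}=I-J_{\mathcal{N}^0_C}\ge I-J_{\mathrm{POVM}}\ge0$. If $\rho_{\bf A}$ is classically correlated from $A_1$, measuring $A_1$ in its eigenbasis already saturates $J_{\mathrm{proj}}=I$, forcing $\mathbb{F}=0$; conversely $\mathbb{F}=0$ gives $J_{\mathrm{POVM}}=I$, i.e. vanishing POVM-discord, which is known to hold only for CQ states. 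I expect the main obstacle to be precisely this last step: controlling the nonstandard unit-trace, $d_1$-outcome measurement class so as to pin down both nonnegativity and the vanishing locus, which the squeeze between the projective and full-POVM discords resolves without having to characterise the optimal measurement explicitly.
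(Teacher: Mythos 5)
Your proof is correct, and for the hard (``only if'') direction it takes a genuinely different route from the paper's. The reductions you perform on the channel--assisted term --- killing $\phi_1$ by unitality on the maximally mixed marginal, absorbing its residual action into the POVM via $\phi_1^\dagger$ (which preserves the unit-trace constraint), and eliminating $\phi_2$ by data processing for the Holevo quantity --- reproduce the paper's Eq.~(\ref{oi4}), and both arguments land on the same closed form. You then go one step further and identify
\begin{equation*}
\mathbb{F}^{pe}_{G\mathcal{N}^0_C}(\rho_{\bf A})=I(A_1{:}A_2)-\max_{\{\pi_l\}}\Bigl[S(\rho_2)-\textstyle\sum_l q_l S(\rho_{2|l})\Bigr]
\end{equation*}
as a discord with the measurement restricted to unit-trace, $d_1$-outcome POVMs, settling faithfulness by squeezing this nonstandard class between rank-one projective measurements and arbitrary POVMs and importing the known fact that POVM-optimized discord vanishes only on CQ states. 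The paper instead stays self-contained: at exactly the point where you invoke the literature, it observes that $\mathbb{F}=0$ forces $S((N^\star\otimes\mathcal{I})\chi_{\bf A})=S(\chi_{\bf A})$ for the optimal free channel $N^\star$, which is unital, and then uses Proposition~\ref{prop1} (a majorization plus relative-entropy argument) to conclude that $N^\star\otimes\mathcal{I}$ fixes the state, whence the explicit CQ decomposition. Your route is shorter and makes the conceptual content transparent --- the quantity \emph{is} a restricted-measurement discord, and the restriction does not shrink or enlarge the zero set --- while the paper's route costs an extra proposition but exhibits the optimal channel explicitly and yields Proposition~\ref{prop1} as a reusable by-product. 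The one thing you should make explicit is a justification or citation for ``zero POVM-discord $\Rightarrow$ CQ''; it is true and standard (it follows from the equality condition in the data-processing inequality for the measurement channel on $A_1$), but it is the entire weight-bearing step of your converse, and it is precisely the statement the paper elects to prove from scratch.
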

Proof is provided in Appendix III. During the process we also prove the following Proposition which is interesting on its own right (see Appendix II).
\begin{proposition}\label{prop1}
For $\Lambda:\mathcal{D}(\mathbb{C}^d)\to\mathcal{D}(\mathbb{C}^d)$ being a CPTP unital map, the states $\rho$ and $\Lambda(\rho)$ have the same spectrum whenever $S(\Lambda(\rho))=S(\rho)$.
\end{proposition}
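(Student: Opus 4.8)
The plan is to reduce the statement to a classical majorization argument on eigenvalue vectors and then exploit the \emph{strict} concavity of the von Neumann entropy. The guiding observation is that a unital CPTP map induces a doubly stochastic transformation on spectra, so that $\Lambda(\rho)\prec\rho$; entropy equality must then collapse this majorization to a mere permutation of eigenvalues.

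First I would diagonalise both states, writing $\rho=\sum_i\lambda_i\,|i\rangle\langle i|$ and $\Lambda(\rho)=\sum_j\mu_j\,|j\rangle\langle j|$ in their respective orthonormal eigenbases, and define the matrix $D_{ji}:=\langle j|\Lambda(|i\rangle\langle i|)|j\rangle$. This matrix is entrywise nonnegative by positivity of $\Lambda$ and reproduces the spectrum via $\mu_j=\sum_i D_{ji}\lambda_i$. The key step is to verify that $D$ is \emph{doubly stochastic}: its columns sum to one because $\Lambda$ is trace preserving, $\sum_j D_{ji}=\tr[\Lambda(|i\rangle\langle i|)]=1$, while its rows sum to one \emph{precisely} because $\Lambda$ is unital, $\sum_i D_{ji}=\langle j|\Lambda(\sum_i|i\rangle\langle i|)|j\rangle=\langle j|\Lambda(I)|j\rangle=\langle j|I|j\rangle=1$. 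Hence $\vec\mu=D\vec\lambda$ with $D$ doubly stochastic, which is exactly Uhlmann's statement that the output is majorized by the input.

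Next, invoking the Birkhoff--von Neumann theorem, I would write $D=\sum_k p_k P_k$ as a convex combination of permutation matrices, so that $\vec\mu=\sum_k p_k P_k\vec\lambda$. Since $S$ is a permutation-invariant concave function of the eigenvalue vector, concavity immediately yields $S(\Lambda(\rho))=S(\vec\mu)\ge\sum_k p_k S(P_k\vec\lambda)=S(\vec\lambda)=S(\rho)$, recovering the fact that unital channels never decrease entropy.

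The crux---and the step I expect to be the main obstacle---is the equality case. Here I would invoke the \emph{strict} concavity of $-\sum_x x\log x$: the inequality above is saturated only if all permuted vectors $P_k\vec\lambda$ carrying positive weight $p_k$ coincide. Consequently $\vec\mu=P_k\vec\lambda$ for any such $k$, i.e.\ $\vec\mu$ is a permutation of $\vec\lambda$, so $\rho$ and $\Lambda(\rho)$ share the same spectrum. Some care is needed with degeneracies in $\vec\lambda$, where distinct permutations may act identically, and with vanishing eigenvalues; but in every case the equality condition for strict concavity forces the eigenvalue multisets to match, completing the argument.
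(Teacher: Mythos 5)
Your proposal is correct, but it takes a genuinely different route from the paper's. The paper's proof orders both spectra increasingly, takes the majorization $\vec q \prec \vec p$ for unital maps as a cited fact, and runs a summation-by-parts argument to show $\sum_i(q_i-p_i)\log q_i\le 0$; it then observes that the hypothesis $S(\Lambda(\rho))=S(\rho)$ turns this very sum into the classical relative entropy $H(\vec p\,\|\,\vec q)\ge 0$, so both inequalities force $H(\vec p\,\|\,\vec q)=0$ and faithfulness of relative entropy gives $\vec p=\vec q$. You instead derive the majorization from scratch via the doubly stochastic matrix $D_{ji}=\langle j|\Lambda(|i\rangle\langle i|)|j\rangle$ (your verification is sound: row and column sums encode unitality and trace preservation, and $\mu_j=\sum_i D_{ji}\lambda_i$ because the $|j\rangle$ diagonalise $\Lambda(\rho)$), decompose $D$ by Birkhoff--von Neumann, and settle the equality case through strict concavity of the Shannon entropy in Jensen's inequality. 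The equality condition you invoke is the right one: strict concavity of $-x\log x$ on $[0,\infty)$ forces all vectors $P_k\vec\lambda$ with $p_k>0$ to coincide, so $\vec\mu$ equals one of them and is therefore a permutation of $\vec\lambda$; the degeneracy and zero-eigenvalue caveats you flag are harmless, since the condition concerns the vectors themselves rather than the permutations realising them. As for what each approach buys: yours is self-contained and textbook-clean, needing no external majorization lemma; the paper's computation yields as a by-product the quantitative bound $S(\Lambda(\rho))-S(\rho)\ge H(\vec p\,\|\,\vec q)$ between the entropy gap and the relative entropy of the sorted spectra, which is what makes its equality analysis immediate.
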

At this point we would like to point out that a special case of our Theorem \ref{theo1} reproduces one of Zurek's results \cite{Zurek2003}. While the set of free channels there consists of measure-and-orthogonal preparation with the measurement being rank-1 projective measurements, in our case we consider more general kind of measurements. Notably, similar analysis as of Theorem \ref{theo1} is worth exploring by fixing $(f_\gamma,\Phi_\gamma,\mathcal{N})\equiv(f_\gamma^{fe},\Phi_\gamma^{to},\mathcal{N}_{tc})$, where $f_\gamma^{fe}$and $\Phi_\gamma^{to}$ respectively denote thermodynamic free energy and thermal operations for a given inverse temperature $\beta$, and $\mathcal{N}_{tc}$ are thermal classical channels compatible to this thermodynamic resource theory. The resulting study will go in line with the works in Refs.~\cite{Oppenheim2002,Brandao2013,Perarnau2015,Morris2019}.

\section{Generic nonclassicality quantifiers and Ergodiscord}
We now proceed to explore the generality of our framework to construct quantifiers for nonclassical correlations that need not to be motivated from a resource theory. To this aim, we start by noting that the function $f:\mathcal{D}(\mathbb{C}^{d})\mapsto\mathbb{R}$ can simply be the expectation value of some observable $\mathcal{O}\in\mbox{Herm}(\mathbb{C}^d)$, {\it i.e.}, $\mathcal{O}:\rho\mapsto\langle\mathcal{O}\rangle_{\rho}:=\tr(\mathcal{O}\rho)$. Unlike the earlier case, here the variation is evaluated over the set of reversible operations, resulting $\Phi$ to be the set of unitary operations $\mathrm{U}$. In this case the variation $\Delta\mathcal{O}^{\mathrm{U}}(\rho):=\max_{u\in\mathrm{U}}[\tr(\mathcal{O}\rho)-\tr(\mathcal{O} u\rho u^\dagger)]$ quantifies the optimal  value of the observable $\mathcal{O}$ extracted from the state $\rho$ when the system is evolved under closed system dynamics. While studying multipartite systems, generality of our framework allows choosing different observables for different sub-parts. Although the choice of the global observable can also be arbitrary, here we consider $\mathcal{O}_{\bf A}\equiv\sum_{j\in[n]}\tilde{\mathcal{O}}_j~$, where $\tilde{\mathcal{O}}_j:=\mathcal{O}_j\otimes\mathbf{I}_{rest}$. From here onward, we, without loss of any generality, will consider the observable to be the Hamiltonian of the system. Accordingly, the variation $\Delta h^{\mathrm{U}}(\rho)$ becomes the ergotropy of the state, which has a long history in literature \cite{Pusz1978,Lenard1978,Allahverdyan2004}. The corresponding function $\mathbb{H}_{GL}(\rho_{\bf A})$ quantifies the ergotropic gap -- the difference between the extractable work from a state $\rho_{\bf A}$ under global and local unitary operations, respectively. Interestingly, this function and its variants can capture different forms of nonclassical correlations, particularly entanglement in bipartite states and genuine entanglement in multipartite quantum states \cite{Mukherjee2016,Alimuddin2019,Alimuddin2020,Puliyil2023,Joshi2024}.
\begin{figure}[t!]
\centering
\includegraphics[width= 0.488\textwidth]{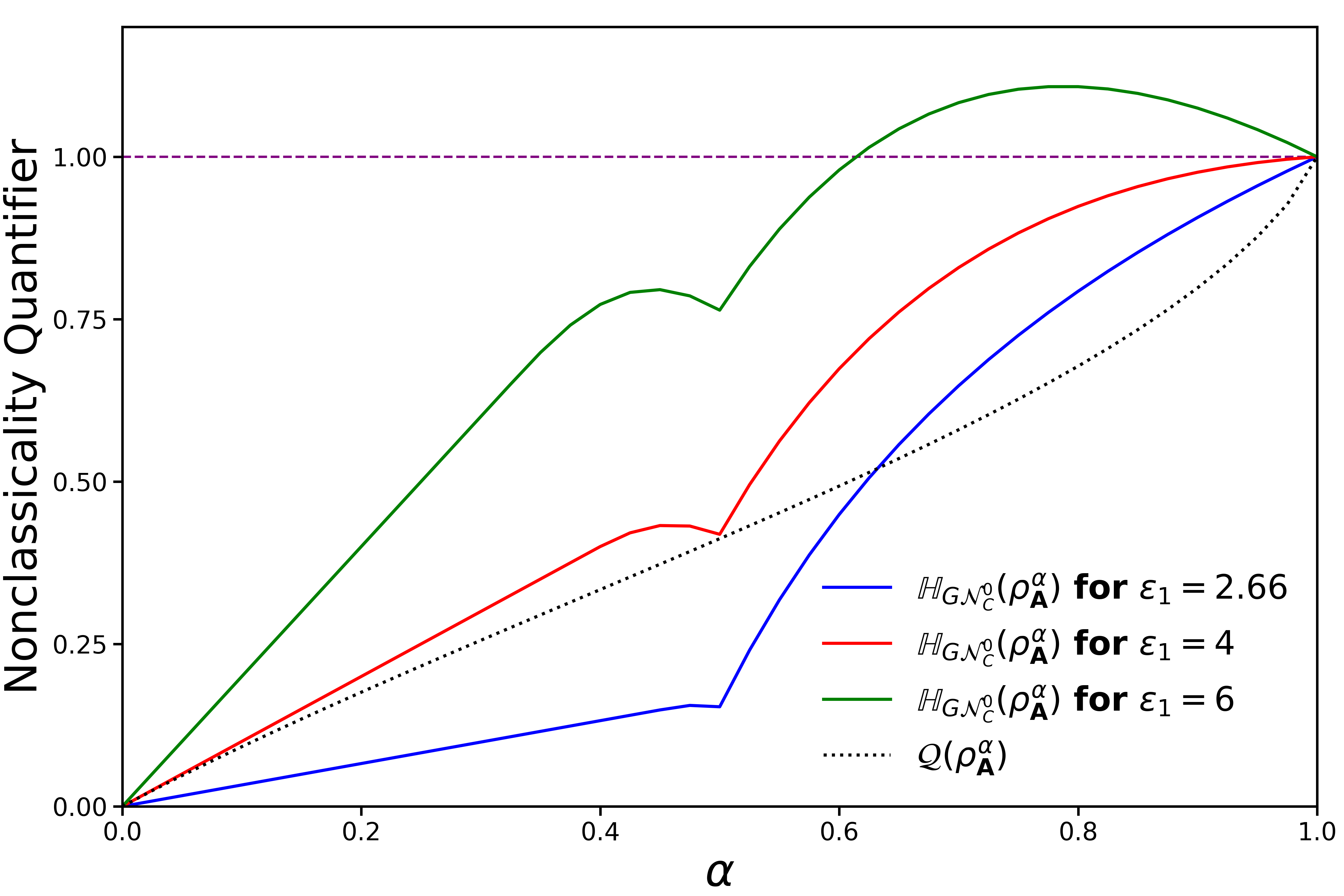}
\caption{Nonlocal energy locking and superadditivity phenomena. Consider the one parameter family of states \(\rho^\alpha_{\bf A}\). Black dotted curve represents quantum discord \(\mathcal{Q}(\rho^\alpha_{\bf A})\) \cite{Zhu2018}. Fixing \(\epsilon_2=2\), the ergodiscord \(\mathbb{H}_{G\mathcal{N}^0_C}(\rho^\alpha_{\bf A})\) is plotted for different values of $\epsilon_1$. For \(\epsilon_1=6\), the value of ergodiscord exceeds unity for certain range of the parameter \(\alpha\). Notably, the corresponding states lock more energy than the maximally entangled state $\ket{\phi^{\max}}$. These states when compared with maximally entangled state, depicts the feature as stated in Theorem \ref{theo4}.}
\label{fig2}\vspace{-.3cm}
\end{figure}

While studying the quantity $\mathbb{H}_{G\mathcal{N}^0_C}$, which is one of our core interests in the present work, we first need to identify the free classical channels $\mathcal{N}^0_C$. As it can be argued a classical channel $N\in\mathcal{N}_C$ communicating $A_j$ sub-part to the $j^\star$ party will be considered to be free if $\tr(h_jN(\rho_j))\le\tr(h_j\rho_j)~\&~\Delta h^{U_j}_jN(\rho_j)\le\Delta h^{U_j}_j(\rho_j),~\forall~\rho_j\in\mathcal{D}(\mathbb{C}^{d_j})$. A more elaborate justification of this demand as well as a complete characterization of free classical channels $\mathcal{N}^0_C$ are provided in  Appendix IV. With this we now present a general result for bipartite systems (proof is deferred to Appendix IV).
\begin{theorem}\label{theo2}
The function $\mathbb{H}_{G\mathcal{N}^0_C}(\rho_{\bf A})$, termed {\bfseries ergodiscord}, faithfully captures nonclassical correlation from $A_1\to A_2$ in any bipartite state $\rho_{\bf A}$, whenever $h_{\bf A}$ has non degenerate spectrum.
\end{theorem}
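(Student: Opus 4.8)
The plan is to prove the two implications ``$\rho_{\mathbf{A}}$ is classical-quantum (CQ) from $A_1\to A_2$ $\Rightarrow\mathbb{O}_{G\mathcal{N}^0_C}=0$'' and ``$\mathbb{O}_{G\mathcal{N}^0_C}=0\Rightarrow$ CQ,'' where CQ means $\rho_{\mathbf{A}}=\sum_i p_i\,\psi^i_1\otimes\rho^i_2$ with $\{\ket{\psi^i_1}\}$ an ONB. I first recast everything through passive energies. Let $P_{\mathbf{A}}(X):=\min_W\tr(\mathcal{O}_{\mathbf{A}}WXW^\dagger)=\sum_k\lambda^{\downarrow}_k(X)\,\epsilon^{\uparrow}_k$, with eigenvalues of $X$ in decreasing and those of $\mathcal{O}_{\mathbf{A}}$ in increasing order; then the global extraction is $\Delta\mathcal{O}^{\mathrm{U}}_{\mathbf{A}}(\rho_{\mathbf{A}})=\tr(\mathcal{O}_{\mathbf{A}}\rho_{\mathbf{A}})-P_{\mathbf{A}}(\rho_{\mathbf{A}})$, and the minimiser (the global passive state) is \emph{unique} because $\mathcal{O}_{\mathbf{A}}$ is non-degenerate. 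Telescoping the three terms of the assisted functional, with $\sigma_{\mathbf{A}}=(u_1\otimes u_2)\rho_{\mathbf{A}}(u_1\otimes u_2)^\dagger$ and $\mathrm{loss}(N):=\tr(\mathcal{O}_{\mathbf{A}}\sigma_{\mathbf{A}})-\tr(\mathcal{O}_{\mathbf{A}}N(\sigma_{\mathbf{A}}))\ge0$, yields
\[
\Delta\mathcal{O}^{u_1}_1+\Delta\mathcal{O}^{u_2}_2+\Delta\mathcal{O}^{U}_{\mathbf{A}}(N(\sigma_{\mathbf{A}}))=\tr(\mathcal{O}_{\mathbf{A}}\rho_{\mathbf{A}})-\mathrm{loss}(N)-\tr(\mathcal{O}_{\mathbf{A}}\,UN(\sigma_{\mathbf{A}})U^\dagger).
\]
Optimising the final unitary and the channel collapses this to the closed form $\mathbb{O}_{G\mathcal{N}^0_C}(\rho_{\mathbf{A}})=\min_{u_1,u_2,N}\big[\mathrm{loss}(N)+P_{\mathbf{A}}(N(\sigma_{\mathbf{A}}))\big]-P_{\mathbf{A}}(\rho_{\mathbf{A}})$, so faithfulness reduces to showing that this minimum equals $P_{\mathbf{A}}(\rho_{\mathbf{A}})$ precisely for CQ states.

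Nonnegativity and the easy implication I would handle together. The core inequality $\mathrm{loss}(N)+P_{\mathbf{A}}(N(\sigma_{\mathbf{A}}))\ge P_{\mathbf{A}}(\sigma_{\mathbf{A}})=P_{\mathbf{A}}(\rho_{\mathbf{A}})$, which I would derive from the Appendix-IV characterisation of $\mathcal{N}^0_C$ (no free channel raises the $A_1$-extractable value), gives $\mathbb{O}_{G\mathcal{N}^0_C}\ge0$. For the easy direction take a CQ $\rho_{\mathbf{A}}$, choose $u_1$ rotating $\{\ket{\psi^i_1}\}$ onto the eigenbasis of $\mathcal{O}_1$, $u_2=\mathbf{I}$, and $N$ the complete dephasing in the $\mathcal{O}_1$-eigenbasis. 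This $N$ preserves $\tr(\mathcal{O}_1\cdot)$ and cannot raise the local extractable value, hence lies in $\mathcal{N}^0_C$; it also leaves $\sigma_{\mathbf{A}}$ (now classical in that basis) invariant, so $\mathrm{loss}(N)=0$ and $P_{\mathbf{A}}(N(\sigma_{\mathbf{A}}))=P_{\mathbf{A}}(\rho_{\mathbf{A}})$. The minimum is attained at $P_{\mathbf{A}}(\rho_{\mathbf{A}})$, giving $\mathbb{O}_{G\mathcal{N}^0_C}=0$.

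For the converse I would prove the contrapositive: if $\sigma_{\mathbf{A}}$ is not CQ, then $\mathrm{loss}(N)+P_{\mathbf{A}}(N(\sigma_{\mathbf{A}}))>P_{\mathbf{A}}(\sigma_{\mathbf{A}})$ for every admissible triple. I would argue that at any minimiser the loss must vanish (dumped value cannot be recovered once $\mathcal{O}_{\mathbf{A}}$ is non-degenerate), and that a loss-free free channel preserves $\mathcal{O}_1$ in the Heisenberg picture and is therefore unital; then $N\otimes\mathbf{I}$ is unital and $N(\sigma_{\mathbf{A}})\prec\sigma_{\mathbf{A}}$. Since $\epsilon^{\uparrow}$ is strictly increasing, $P_{\mathbf{A}}$ is strictly Schur-concave, so $P_{\mathbf{A}}(N(\sigma_{\mathbf{A}}))=P_{\mathbf{A}}(\sigma_{\mathbf{A}})$ would force equal spectra, i.e. $S(N(\sigma_{\mathbf{A}}))=S(\sigma_{\mathbf{A}})$. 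Proposition~\ref{prop1} applied to the unital map $N\otimes\mathbf{I}$ then upgrades this to spectral identity, and the strict pinching inequality shows that a measure-and-prepare channel can preserve the spectrum only when $\sigma_{\mathbf{A}}$ is block-diagonal in the measurement basis on $A_1$, i.e. CQ — a contradiction. Hence strictness holds, completing faithfulness.

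The main obstacle is exactly this equality/rigidity step. Two facts must be combined with care: that non-degeneracy of $\mathcal{O}_{\mathbf{A}}$ makes the global passive state unique and $P_{\mathbf{A}}$ strictly Schur-concave, so that matching the global optimum forces exact reproduction of $\rho_{\mathbf{A}}$'s spectrum; and that any classical channel genuinely probing $A_1$-coherence strictly increases mixedness, which is incompatible with such reproduction. Making the reduction to loss-free, unital channels rigorous is where I expect the bulk of the work to sit: one must use the full characterisation of $\mathcal{N}^0_C$ from Appendix IV to rule out that a non-unital free channel (which trades correlation for loss, as a reset-to-ground channel does) could attain the minimum for a non-CQ state, a subtlety that the naive ``drop the loss'' majorisation bound does not by itself resolve.
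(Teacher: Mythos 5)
Your route is essentially the paper's: characterise the zero-cost classical channels, rewrite every term through passive values (your $P_{\bf A}$ is the paper's $\mathcal{O}^p_{\bf A}$), prove CQ~$\Rightarrow 0$ by rotating the classical basis onto the $\mathcal{O}_1$-eigenbasis and dephasing there, and prove the converse via unitality $\Rightarrow$ majorisation $\Rightarrow$ strictness of the passive value for non-degenerate $\mathcal{O}_{\bf A}$ $\Rightarrow$ equal spectra $\Rightarrow$ pinching rigidity $\Rightarrow$ CQ. These are precisely the paper's Lemmas \ref{lemma2}--\ref{lem5} and its two-part argument, and on the shared steps your sketch is sound. Two small repairs: once strict Schur-concavity gives equal spectra, the appeal to Proposition \ref{prop1} is redundant (equal spectra is already stronger than equal entropy); and the inference ``loss-free $\Rightarrow$ unital'' is not valid in general --- you should instead invoke the explicit two-case characterisation of $\mathcal{N}^0_C$ (Lemma \ref{lemma2}), whose only non-trivial member, the $\mathcal{O}_1$-eigenbasis dephasing, is unital and loss-free by inspection.

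The step you leave open --- excluding the non-unital reset channel of Eq.~(\ref{l2a}) from attaining the optimum --- is the genuine crux, and you should know that the paper does not close it either: its ``only if'' argument simply takes $N$ of the form (\ref{l2b}) and never revisits the pin-map branch of its own Lemma \ref{lemma2}. The worry is not hypothetical. For two qubits with $\rho_{\bf A}=\mathbf{I}/4$ (a classically correlated state with zero global extractable value), the reset channel outputs $\ket{o_1}\bra{o_1}\otimes\mathbf{I}/2$, from which a strictly positive amount of $\mathcal{O}_{\bf A}$ is then unitarily extractable whenever the spectral gap of $\mathcal{O}_2$ exceeds that of $\mathcal{O}_1$; admitting that channel at zero cost would therefore drive $\mathbb{O}_{G\mathcal{N}^0_C}$ negative on a CQ state and break faithfulness outright. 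So the resolution cannot be the majorisation bound you propose (the reset map is not unital); it must come from assigning the reset channel a strictly positive cost $\mathcal{C}(N)$ or excluding it from $\mathcal{N}^0_C$ altogether --- a point on which your sketch is honest but incomplete, and on which the paper itself is silent.
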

Notably unlike Theorem \ref{theo1}, the function $\mathbb{H}_{G\mathcal{N}^0_C}(\rho_{\bf A})$ faithfully captures signature of nonclassical correlation in any bipartite state without assuming any further constraint on the state. Theorem \ref{theo2} has quite interesting implications. A nonzero value of $\mathbb{H}_{G\mathcal{N}^0_C}(\rho_{\bf A})$ implies that the extractable work under global operations is strictly greater than the work extractable under local operations assisted with classical communication from $A_1$ to $A_2$. In other words, work gets locked in the nonclassical correlation of the state. This phenomenon we call the  {\it nonlocal energy locking} in nonclassical correlation. The motivation of this name arises from the pioneering `nonlocality without entanglement' phenomenon, where distinguishability of a set of multipartite product states under local operation and classical communication is suboptimal than the global distinguishability \cite{Peres1991,Bennett1999} (see also \cite{Halder2019,Bhattacharya2020,Banik2021,Rout2021}), and the works \cite{DiVincenzo2004,Horodecki2005,Smolin2006} that report several locking phenomena utilizing quantum resources). The following crucial property of the quantity discussed in Theorem \ref{theo2} is worth noting.
\begin{proposition}\label{prop2}
$\mathbb{H}_{G\mathcal{N}^0_C}(\rho_{\bf A})=\mathbb{H}_{G\mathcal{N}^0_C}(\sigma_{\bf A})$, where $\sigma_{\bf A}=u_{1}\otimes u_2(\rho_{\bf A})u^\dagger_{1}\otimes u^\dagger_2$ for arbitrary $u_i$'s on the respective parts.
\end{proposition}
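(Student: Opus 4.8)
The plan is to show that the local-unitary--dependent contribution to $\mathbb{O}_{G\mathcal{N}^0_C}$ cancels between its two constituent terms, after which the surviving pieces are manifestly invariant. I would first expand the global extraction as $\Delta\mathcal{O}^{\Phi_{\bf A}}_{\bf A}(\rho_{\bf A})=\tr(\mathcal{O}_{\bf A}\rho_{\bf A})-\mathcal{P}(\mathcal{O}_{\bf A},\rho_{\bf A})$, where $\mathcal{P}(\mathcal{O}_{\bf A},\rho_{\bf A}):=\min_{V\in\mathrm{U}}\tr(\mathcal{O}_{\bf A}\,V\rho_{\bf A}V^\dagger)$ is the passive value, which by the standard rearrangement argument for ergotropy depends only on the spectra of $\mathcal{O}_{\bf A}$ and $\rho_{\bf A}$. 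Using $\mathcal{O}_{\bf A}=\sum_j\tilde{\mathcal{O}}_j$, so that $\sum_j\tr(\mathcal{O}_j\rho_j)=\tr(\mathcal{O}_{\bf A}\rho_{\bf A})$, and writing $\tau_{\bf A}:=(v_1\otimes v_2)\rho_{\bf A}(v_1\otimes v_2)^\dagger$ for the post-local-operation state (with $v_j$ the local optimization unitaries), I would likewise expand the assisted term as $\Delta\mathcal{O}^{\{\Phi_\gamma\}}_{\mathcal{N}^0_C}(\rho_{\bf A})=\tr(\mathcal{O}_{\bf A}\rho_{\bf A})+\max_{\{v_j,N\}}\big[\tr(\mathcal{O}_{\bf A}N(\tau_{\bf A}))-\mathcal{P}(\mathcal{O}_{\bf A},N(\tau_{\bf A}))-\sum_j\tr(\mathcal{O}_j v_j\rho_j v_j^\dagger)-\mathcal{C}(N)\big]$, the bare expectation value being a constant pulled out of the maximization.

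Taking the difference, the terms $\tr(\mathcal{O}_{\bf A}\rho_{\bf A})$ cancel, leaving $\mathbb{O}_{G\mathcal{N}^0_C}(\rho_{\bf A})=-\mathcal{P}(\mathcal{O}_{\bf A},\rho_{\bf A})+\min_{\{v_j,N\}}\big[\sum_j\tr(\mathcal{O}_j v_j\rho_j v_j^\dagger)-\tr(\mathcal{O}_{\bf A}N(\tau_{\bf A}))+\mathcal{P}(\mathcal{O}_{\bf A},N(\tau_{\bf A}))+\mathcal{C}(N)\big]$. I would then substitute $\rho_{\bf A}\mapsto\sigma_{\bf A}=(u_1\otimes u_2)\rho_{\bf A}(u_1\otimes u_2)^\dagger$ and check invariance term by term. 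Since $\sigma_{\bf A}$ is unitarily equivalent to $\rho_{\bf A}$ it has the same spectrum, so $\mathcal{P}(\mathcal{O}_{\bf A},\sigma_{\bf A})=\mathcal{P}(\mathcal{O}_{\bf A},\rho_{\bf A})$ and the first term is unchanged. For the optimization, I would use that the marginals transform as $\sigma_j=u_j\rho_j u_j^\dagger$ (by cyclicity of the partial trace over the complementary subsystem) and that the intermediate state becomes $(v_1\otimes v_2)\sigma_{\bf A}(v_1\otimes v_2)^\dagger=(v_1u_1\otimes v_2u_2)\rho_{\bf A}(v_1u_1\otimes v_2u_2)^\dagger$. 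The substitution $v_j\mapsto v_j':=v_ju_j$ is a bijection of the unitary group for each fixed $u_j$, and it carries every term of the objective evaluated at $\sigma_{\bf A}$ into the corresponding term evaluated at $\rho_{\bf A}$; the channel $N$ and its cost $\mathcal{C}(N)$ are untouched, since $N$ acts after the local unitaries and the free set $\mathcal{N}^0_C$ is fixed independently of the state. Hence the optimal values coincide and $\mathbb{O}_{G\mathcal{N}^0_C}(\sigma_{\bf A})=\mathbb{O}_{G\mathcal{N}^0_C}(\rho_{\bf A})$.

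The step I would flag as the crux is the cancellation of $\tr(\mathcal{O}_{\bf A}\rho_{\bf A})$: neither the global extraction $\Delta\mathcal{O}^{\Phi_{\bf A}}_{\bf A}$ nor the local-extraction contributions are individually local-unitary invariant, because $\tr\big(\mathcal{O}_{\bf A}\,(u_1\otimes u_2)\rho_{\bf A}(u_1\otimes u_2)^\dagger\big)=\tr\big([(u_1^\dagger\mathcal{O}_1u_1)\otimes\mathbf{I}_2+\mathbf{I}_1\otimes(u_2^\dagger\mathcal{O}_2u_2)]\rho_{\bf A}\big)$ genuinely depends on $u_1,u_2$. Invariance of the full quantity rests entirely on this contribution entering both terms additively with the same coefficient, so that it drops out of the difference; once this is secured, the remainder is the routine right-translation argument on the unitary group together with the spectral (hence unitarily invariant) character of the passive value $\mathcal{P}$.
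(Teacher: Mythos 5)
Your proof is correct, and it reaches the conclusion by a route that differs in structure from the paper's. The paper first reduces $\mathbb{O}_{G\mathcal{N}^0_C}$ to a difference of passive values, $\min_{U_1} E(\widetilde{\chi}^p_{\bf A})-E(\rho^p_{\bf A})$, and then treats the two local unitaries asymmetrically: the unitary $W_2$ on $A_2$ is commuted past the channel $N\otimes\mathcal{I}$ (which acts only on $A_1$) so that it merely conjugates the post-channel state and leaves its spectrum, hence its passive energy, unchanged; the unitary $V_1$ on $A_1$ is handled by a two-sided optimality contradiction (if the optimum for $\sigma_{\bf A}$ were strictly better or worse, it would contradict optimality of the corresponding pre-channel unitary for $\rho_{\bf A}$). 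You instead keep the full objective, make explicit the cancellation of the non-invariant terms $\tr(\mathcal{O}_{\bf A}\rho_{\bf A})=\sum_j\tr(\mathcal{O}_j\rho_j)$ between the global and assisted extractions, and then dispose of both local unitaries at once by the right-translation bijection $v_j\mapsto v_ju_j$ on the unitary group, every surviving term being a function only of $(v_1\otimes v_2)\rho_{\bf A}(v_1\otimes v_2)^\dagger$, the marginals $v_j\rho_jv_j^\dagger$, and the state-independent pair $(N,\mathcal{C}(N))$. Your version buys uniformity (no case split between the two parties, no contradiction argument) and makes visible exactly why invariance holds despite neither constituent term being local-unitary invariant on its own, which the paper leaves implicit in its reduction to passive energies; the paper's version buys brevity by starting from the already-simplified expression $\mathcal{O}^p_{\bf A}(\tilde\eta_{\bf A})-\mathcal{O}^p_{\bf A}(\rho_{\bf A})$ derived in the proof of Theorem \ref{theo2}. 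Both arguments rest on the same two facts, namely that passive values depend only on spectra and that pre-channel local unitaries form a group over which one optimizes.
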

\begin{proof}
While the Proposition holds for any observable $\mathbb{O}$, here we provide the proof by considering it to be the system's Hamiltonian. Here we will use a property of ergodiscord which we prove for general observables in the Appendix (see equation (\ref{del})).
\begin{align}
\mathbb{H}_{GN^0_c}(\rho_{{\bf A}}) = \min_{U_1} E(\widetilde{\chi}_{\bf A}^p) - E(\rho_{\bf A}^p),   
\end{align}
where $\rho^p_{\bf A}$ and $\widetilde{\chi}^p_{\bf A}$ respectively denotes the passive state corresponding to $\rho_{\bf A}$ and $N_c^o \circ U_1 (\rho_{\bf A})$. Let $U_1^\star$ be the optimizing unitary operation. Then, the expression simplifies to 
\begin{align}\label{pr21}
\mathbb{H}_{GN^0_c}(\rho_{\bf A}) = E(\chi_{\bf A}^p) - E(\rho_{\bf A}^p),  
\end{align}
with $\chi^p_{\bf A}$ denoting the passive state corresponding to $N_c^o \circ U^\star_1 (\rho_{\bf A})$. The evolution of the state is thus given by

\begin{equation}\label{pr22}
\begin{tikzcd}[row sep=large, column sep=small, scale=1, transform shape]
  \rho_{\mathbf A}
    \arrow[r, "U^\star_1"]
  & \tau_{\mathbf A}
    = U^\star_1\,\rho_{\mathbf A}\,(U^\star_1)^\dagger
    \arrow[r, "N^0_c"]
  & \chi_{\mathbf A}
    = N^0_c(\tau_{\mathbf A})
    \arrow[d, "U_{\mathbf A}"']
  \\
  & & \chi^p_{\mathbf A}
    = U_{\mathbf A}\,\chi_{\mathbf A}\,U_{\mathbf A}^\dagger
\end{tikzcd}
\end{equation}

Let $\sigma_{\bf A} = (V_1 \otimes W_2) \rho_{\bf A} (V_1^\dagger \otimes W_2^\dagger)$, where $V_1$ and $W_2$ are some arbitrary local unitary on the $1$ and $2$ sub-parts respectively. The expression (\ref{pr21}) in this case reads as,
\begin{align}
\mathbb{H}_{GN^0_c}(\sigma_{\bf A}) = E(\eta_{\bf A}^p) - E(\sigma_{\bf A}^p),   
\end{align}
and the state evolution [as describe in Eq.(\ref{pr22})] in this case becomes,

\begin{equation}\label{pr23}
\begin{tikzcd}[row sep=large, column sep=small, scale=1, transform shape]
  \sigma_{\mathbf A}
    \arrow[r, "Z^\star_1"]
  & \zeta_{\mathbf A}
    = Z^\star_1\,\sigma_{\mathbf A}\,(Z^\star_1)^\dagger
    \arrow[r, "N^0_c"]
  & \eta_{\mathbf A}
    = N^0_c(\zeta_{\mathbf A})
    \arrow[d, "U'_{\mathbf A}"']
  \\
  & & \eta^p_{\mathbf A}
    = U'_{\mathbf A}\,\eta_{\mathbf A}\,(U'_{\mathbf A})^\dagger
\end{tikzcd}
\end{equation}

Note that $\sigma^{p}_{\bf A} = \rho^p_{\bf A}$ as $\rho_{\bf A}$ and $\sigma_{\bf A}$ are unitarily connected. Therefore, to establish the claim of the proposition, we only need to examine the invariance of $E(\eta^p_{\bf A})$ and $E(\chi^p_{\bf A})$. The proof proceeds by discussing these aspects in two parts.
\begin{itemize}
\item[{\bf P-I:}] Consider that $\sigma_{\bf A} = (I_1 \otimes W_2) \rho_{\bf A} (I_1 \otimes W_2^\dagger)$. The optimal state $\eta_{\bf A}$ achieved in Eq.(\ref{pr23}) when starting with the initial state $\sigma_{\bf A}$ can also be achieved staring with the initial state $\rho_{\bf A}$. For that the second party applies the unitary $W_2$ on the state $\chi_{\bf A}$ in Eq.(\ref{pr22}), i.e., $\eta_{\bf A} = (I_1 \otimes W_2) \chi_{\bf A} (I_1 \otimes W_2^\dagger)$, implying $E(\chi^p_{\bf A}) = E(\eta^p_{\bf A})$.
\item[{\bf P-I:}] Consider that $\sigma_{\bf A}= (V_1 \otimes I_2) \rho_{\bf A} (V_1^\dagger \otimes I_2^\dagger)$. Assume that, $E(\chi^p_{\bf A})> E(\eta^p_{\bf A})$. On the other hand, we have 
\begin{align}
\rho_{\bf A}\xrightarrow{V_1} \sigma_{\bf A}\xrightarrow{Eq.(\ref{pr23})}\eta^p_{\bf A}.    
\end{align}
This implies that starting from $\rho_{\bf A}$, they can reach to the passive state $\eta^p_{\bf A}$. This along with the assumption $E(\chi^p_{\bf A})> E(\eta^p_{\bf A})$ contradicts the claim that $U^\star_1$ in Eq.(\ref{pr22}) is the optimal unitary, and thus $E(\chi^p_{\bf A})\ngtr E(\eta^p_{\bf A})$.

Now, assume that $E(\chi^p_{\bf A})< E(\eta^p_{\bf A})$. Again this will contradict the optimality of $Z^\star_1$ in Eq.(\ref{pr23}). Therefore we have, $E(\chi^p_{\bf A})=E(\eta^p_{\bf A})$.
\end{itemize}
(P-I) and (P-II) together thus implies $\mathbb{H}_{GN^0_c}(\rho_{\bf A}) = \mathbb{H}_{GN^0_c}\big((V_1 \otimes W_2) \rho_{\bf A} (V_1^\dagger \otimes W_2^\dagger)\big)$. This completes the proof.
\end{proof}
\section{Ergodiscord for general two-qubit states}
A simplified from of $\mathbb{H}_{G\mathcal{N}^0_C}(\rho_{\bf A})$ can be obtained for generic two-qubit state $\rho_{\bf A}=\frac{1}{4}(\mathbf{I}+\boldsymbol{\sigma}^\intercal_1{\bf r}_1+{\bf r}^\intercal_2\boldsymbol{\sigma}_2+\boldsymbol{\sigma}^\intercal_1{\bf T}\boldsymbol{\sigma}_2)$.
\begin{theorem}\label{theo3}
$\mathbb{H}_{G\mathcal{N}^0_C}(\rho_{\bf A})=\min_{\bf m}\sum_{k=1}^4E_k\left(q_k({\bf m})-p_k\right)$; where $\{E_k\}$ are the non-degenerate increasing eigenvalues of $\mathbb{H}$, $\{p_k\}$ are the decreasing spectral values of $\rho_{\bf A}$, and $\{q_k({\bf m})\}$'s take values from the set $\{{\footnotesize \frac{1}{4}\left(1+(-1)^i{\bf m}^\intercal{\bf r}_1+(-1)^j\left|{\bf r}^\intercal_2+(-1)^i{\bf m}^\intercal{\bf T}\right|\right)}\}_{i,j=0}^1$ in decreasing order, and ${\bf m}$ is an arbitrary Bloch vector with $|{\bf m}|=1$.
\end{theorem}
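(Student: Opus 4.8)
The plan is to express $\mathbb{H}_{G\mathcal{N}^0_C}(\rho_{\bf A})=\Delta\mathbb{H}^{\Phi_{\bf A}}_{\bf A}(\rho_{\bf A})-\Delta\mathbb{H}^{\{\Phi_\gamma\}}_{\mathcal{N}^0_C}(\rho_{\bf A})$ as a difference of two quantities of the form ``total energy minus a passive energy'', and to show that the total-energy pieces cancel while the two passive energies differ precisely through the spectra $\{p_k\}$ and $\{q_k({\bf m})\}$. First, since $\Phi_{\bf A}=\mathrm U$, the global term $\Delta\mathbb{H}^{\mathrm U}_{\bf A}(\rho_{\bf A})=\tr(\mathbb{H}\rho_{\bf A})-\min_u\tr(\mathbb{H}u\rho_{\bf A}u^\dagger)$ is the ordinary ergotropy; by the rearrangement (passive-state) inequality the minimum is attained by populating the smallest eigenvalue $E_k$ with the largest occupation $p_k$, giving $\Delta\mathbb{H}^{\mathrm U}_{\bf A}(\rho_{\bf A})=\tr(\mathbb{H}\rho_{\bf A})-\sum_kE_kp_k$.

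For the $\mathcal{N}^0_C$-assisted term I would invoke the free-channel characterization of Appendix IV together with the local-unitary invariance of Proposition \ref{prop2}. Retaining the full block structure of the classical--quantum output requires orthogonal preparations, and the energy-nonincreasing constraint $\tr(\mathcal{O}_1N(\rho_1))\le\tr(\mathcal{O}_1\rho_1)$ (valid for all inputs) then forces the measurement on $A_1$ to be projective in its local energy eigenbasis with the prepared states being the corresponding eigenstates -- an energy-preserving dephasing. Composing this free channel with an arbitrary local pre-rotation $u_1$ (absorbed by Proposition \ref{prop2}) is equivalent to a rank-one projective measurement of $A_1$ along an arbitrary unit Bloch vector ${\bf m}$; the output $N(\sigma_{\bf A})$ is then block-diagonal with spectrum $\{q_k({\bf m})\}$. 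Crucially, because this channel preserves $\tr(\mathcal{O}_{\bf A}\,\cdot\,)$, the energy extracted locally exactly cancels the energy of the channel output, and the final global unitary brings $N(\sigma_{\bf A})$ to its passive form, so that $\Delta\mathbb{H}^{\{\Phi_\gamma\}}_{\mathcal{N}^0_C}(\rho_{\bf A})=\tr(\mathbb{H}\rho_{\bf A})-\min_{\bf m}\sum_kE_kq_k^{\downarrow}({\bf m})$, with $q_k^{\downarrow}$ the decreasing reordering.

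It then remains to identify $\{q_k({\bf m})\}$ explicitly. Projecting $A_1$ onto the eigenstates $P_\pm=\tfrac12(\mathbf{I}\pm{\bf m}^\intercal\boldsymbol{\sigma})$ yields outcome probabilities $p_\pm=\tfrac12(1\pm{\bf m}^\intercal{\bf r}_1)$ and conditional $A_2$ Bloch vectors $({\bf r}_2\pm{\bf T}^\intercal{\bf m})/(1\pm{\bf m}^\intercal{\bf r}_1)$; the eigenvalues of the unnormalised blocks $p_\pm\rho_2^\pm$ are $\tfrac14(1+s\,{\bf m}^\intercal{\bf r}_1+s'|{\bf r}_2^\intercal+s\,{\bf m}^\intercal{\bf T}|)$ for independent signs $s,s'\in\{\pm1\}$, which is exactly the four-element set of the statement. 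Subtracting the two terms, the common $\tr(\mathbb{H}\rho_{\bf A})$ cancels and the ${\bf m}$-independent $\sum_kE_kp_k$ can be pulled inside the minimization, giving $\mathbb{H}_{G\mathcal{N}^0_C}(\rho_{\bf A})=\min_{\bf m}\sum_kE_k(q_k^{\downarrow}({\bf m})-p_k)$, as claimed; the non-degeneracy of $\{E_k\}$ guarantees that the passive ordering is unambiguous.

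The hard part will be the optimality claim hidden inside the assisted term: proving that among all free classical channels the projective energy-basis measurement (equivalently, a projective measurement along some ${\bf m}$) is optimal, i.e. that no POVM-based free channel produces a classical--quantum output whose passive energy $\sum_kE_k\lambda_k^{\downarrow}(N(\sigma_{\bf A}))$ falls below $\min_{\bf m}\sum_kE_kq_k^{\downarrow}({\bf m})$, and that the energy-preservation property makes the local-extraction bookkeeping collapse exactly rather than merely bound it. This is precisely where the full characterization of $\mathcal{N}^0_C$ in Appendix IV, together with a majorization/convexity argument on the set of achievable output spectra, would have to do the work.
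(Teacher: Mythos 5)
Your proposal follows essentially the same route as the paper's Appendix VI: reduce $\mathbb{H}_{G\mathcal{N}^0_C}$ to a difference of passive-state energies (so the $\tr(\mathbb{H}\rho_{\bf A})$ terms cancel), absorb the local pre-rotation and the fixed dephasing basis into a single unit Bloch vector ${\bf m}$, compute the four-element spectrum of the measured state, and apply the rearrangement inequality with $\{E_k\}$ increasing and $\{q_k\},\{p_k\}$ decreasing. The ``hard part'' you flag at the end is already discharged by Lemma 2 of Appendix IV, which you yourself invoke: in the observable framework the free classical channels are exactly the rank-one projective dephasings in the eigenbasis of $\mathcal{O}_1$ (plus the trivial pin map), so no further POVM-optimality or majorization argument is needed beyond the minimization over ${\bf m}$.
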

\begin{proof}
A general 2-qubit state may be written as
\begin{align}
\rho=\frac{1}{4}\left[\id+\transp{\bsa}\ra+\transp{\rb}\bsb+\bsa^\intercal\bT\bsb\right],
\end{align}
where $\ra$ is the Bloch vector of subsystem $A,$ $\bsa=(\sigma_x,\sigma_y,\sigma_z)_A$ is the vector of 3 Pauli matrices for subsystem $A$ (same idea for $B$) and $\bT$ is the correlation matrix. These can be obtained via:
\begin{align}
    \ra=\trace{\rho\bsa},~~~\rb=\trace{\rho\bsb},~~~ \bT=\trace{\rho\bsa\transp{\bsb}}.
\end{align}
Application of local unitary $U_1$ is equivalent to applying a 3x3 real rotation matrix to $\bsa$, or equivalently replacing $\ra\to\bR\ra$ and $\bT\to\bR\bT$.
Then we have
\begin{align}
    \tau=U_1\rho U_1^\dagger=\frac{1}{4}\left[\id+\transp{\bsa}\bR\ra+\transp{\rb}\bsb+\bsa^\intercal\bR\bT\bsb\right],
\end{align}
The classical channel $\mathcal{N}^0_C$ between Alice and Bob is a measure and prepare channel. Any generic pair of qubit projectors can be written as
$ P_{\pm\bn}=\proj{\bn_\pm}=\frac{1}{2}\left(\id\pm\transp{\bsa}\bn\right),$
where $\bn$ is a unit vector in the Bloch sphere (it identifies the 2 orthogonal pure states forming the eigenbasis).
This projective measurement also follows $\bra{\bn_\pm}\transp\bsa\ket{\bn_\pm}=\pm\transp\bn$.
So the projected state of Bob after passing through $\mathcal{N}^0_C$ is
\begin{align}
\tau_\pm&=\bra{\bn_\pm}\tau\ket{\bn_\pm}_A\nonumber\\
&=\frac{1}{4}\left[\id\pm\transp{\bn}\bR\ra+\left(\transp\rb\pm\bn^\intercal\bR\bT\right)\bsb\right],
\end{align}
The state accessible to Bob after all the above operations is $\chi=P_{\bn}\tau_{\bn}+P_{-\bn}\tau_{-\bn}$. We know that the eigenvalues of a qubit Hermitian operator of the form $B=b_0\id+\transp{\boldsymbol{b}}\bsb$ are $\lambda_{\pm}=b_0\pm|\boldsymbol{b}|$.
So we get the eigenvalues of the partial projections:
\begin{align*}
{\sf Spectrum}[\tau_{+\bn}]&=\left\{\frac{1}{4}\left[1+\transp{\bn}\bR\ra\pm\left|\transp\rb+\bn^\intercal\bR\bT\right|\right]\right\},\\
{\sf Spectrum}[\tau_{-\bn}]&=\left\{\frac{1}{4}\left[1-\transp{\bn}\bR\ra\pm\left|\transp\rb-\bn^\intercal\bR\bT\right|\right]\right\}.
\end{align*}
Since the projectors $P_{\pm\bn}$ are mutually orthogonal, the eigenvalues of $\chi$ follow directly from our previous results:
\begin{align}
{\sf Spectrum}[\chi]=\left\{\!\begin{aligned}
\frac{1}{4}\left[1+\transp{\bn}\bR\ra+\left|\transp\rb+\bn^\intercal\bR\bT\right|\right],\\~\frac{1}{4}\left[1+\transp{\bn}\bR\ra-\left|\transp\rb+\bn^\intercal\bR\bT\right|\right]\\
\frac{1}{4}\left[1-\transp{\bn}\bR\ra+\left|\transp\rb-\bn^\intercal\bR\bT\right|\right],\\~\frac{1}{4}\left[1-\transp{\bn}\bR\ra-\left|\transp\rb-\bn^\intercal\bR\bT\right|\right]\end{aligned}\right\}.
\end{align}
We note that $\bR$ always appears in the form $\transp{\bn}\bR$, so we may simplify the problem by defining a new unit vector ${\bf m}\equiv \transp{\bR}\bn$. Now, we only need to optimize over the vector ${\bf m}$, under the constraint $|{\bf m}|=1$, so this gives us 
\begin{align}
 &{\sf Spectrum}[\chi]=\{s_{ij}({\bf m}):=\nonumber\\ 
 &\frac{1}{4}(1+(-1)^i{\bf m}^\intercal{\bf r}_1+(-1)^j|{\bf r}^\intercal_2+(-1)^i{\bf m}^\intercal{\bf T}|)\}_{i,j=0}^1.
\end{align}
After arranging the spectrum in decreasing order we get $\{s_{ij}({\bf m})\}\mapsto \{q_k({\bf m})$\}. Let $\{E_k\}$ be the non-degenerate increasing eigenvalues of $\mathbb{H}$, $\{p_k\}$ are the decreasing spectral of $\rho_{\bf A}$. Then the extractable ergotropy is.
\begin{align}
H_{GN^0_c}(\rho_{\bf A}) &= \min_{u_1} E({\chi}_{\bf A}^p) - E(\rho_{\bf A}^p)\nonumber\\&=\min_{m}\sum_{k=1}^4(q_k({\bf m})-p_k)E_k.
\end{align}
This completes the proof.
\end{proof}

We also obtain a closed letter expression of this quantity for the states having completely mixed marginals which is stated in the following Corollary.
\begin{corollary}\label{coro1}
For a two-qubit state having completely mixed marginals,  $\mathbb{H}_{G\mathcal{N}^0_C}(\rho_{\bf A})=\omega(E_1+E_2)/2+(1-\omega)(E_3+E_4)/2-\sum_kp_kE_k$, where $2\omega:=1+\sqrt{\lambda}$, with $\lambda$ being the maximum eigenvalue of the matrix ${\bf TT}^\intercal$.
\end{corollary}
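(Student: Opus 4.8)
The plan is to derive Corollary~\ref{coro1} as a direct specialization of Theorem~\ref{theo3} to the case of completely mixed marginals, ${\bf r}_1={\bf r}_2={\bf 0}$, where the minimization over the Bloch vector ${\bf m}$ collapses to a one-parameter optimization. First I would substitute ${\bf r}_1={\bf r}_2={\bf 0}$ into the defining set for $\{q_k({\bf m})\}$. The key observation is that, because the factor $(-1)^i{\bf m}^\intercal{\bf T}$ enters only through the absolute value $|(-1)^i{\bf m}^\intercal{\bf T}|=|{\bf m}^\intercal{\bf T}|$, the index $i$ becomes irrelevant: the four candidate values reduce to just two distinct numbers, $\tfrac{1}{4}(1\pm|{\bf m}^\intercal{\bf T}|)$, each occurring twice.

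Writing $x:=|{\bf m}^\intercal{\bf T}|\ge 0$ and arranging the $q_k$ in decreasing order gives $q_1=q_2=\tfrac{1}{4}(1+x)$ and $q_3=q_4=\tfrac{1}{4}(1-x)$. Substituting into the objective of Theorem~\ref{theo3} and isolating the ${\bf m}$-dependence, I would write
\begin{equation}
\sum_{k=1}^4 E_k\, q_k({\bf m})=\tfrac{1}{4}\big[(E_1+E_2+E_3+E_4)+x\,(E_1+E_2-E_3-E_4)\big],
\end{equation}
so that the entire dependence on ${\bf m}$ is carried by the single scalar $x$. Since the term $-\sum_k p_k E_k$ is independent of ${\bf m}$, the minimization reduces to optimizing this scalar.

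The optimization step exploits the increasing convention $E_1\le E_2\le E_3\le E_4$, which forces the coefficient $(E_1+E_2-E_3-E_4)\le 0$; hence minimizing the objective requires \emph{maximizing} $x$. Because $x^2=|{\bf m}^\intercal{\bf T}|^2={\bf m}^\intercal{\bf T}{\bf T}^\intercal{\bf m}$ is a Rayleigh quotient over unit vectors, its maximum equals the largest eigenvalue $\lambda$ of ${\bf T}{\bf T}^\intercal$, so $\max_{\bf m}x=\sqrt{\lambda}$. Inserting $\sqrt{\lambda}=2\omega-1$ and regrouping the coefficients of $E_1+E_2$ and $E_3+E_4$ then yields $\omega(E_1+E_2)/2+(1-\omega)(E_3+E_4)/2$, which upon subtracting the fixed term $\sum_k p_k E_k$ reproduces the stated expression.

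The step I expect to require most care is fixing the correct direction of the optimization: one must confirm, from the increasing ordering of the $E_k$, that the coefficient multiplying $x$ is non-positive, so that the minimum over ${\bf m}$ is attained at the \emph{largest} admissible $x$ rather than the smallest, and simultaneously check that the decreasing rearrangement $q_1\ge q_2\ge q_3\ge q_4$ remains valid for all $x\in[0,1]$ (which it does, since $\tfrac{1}{4}(1+x)\ge\tfrac{1}{4}(1-x)$, and $\sqrt{\lambda}\le 1$ for any physical two-qubit state). The remaining algebraic regrouping is routine.
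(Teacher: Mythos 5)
Your proof is correct, and it reaches the result by a route that is slightly more direct than the paper's. The paper first uses local-unitary invariance (Proposition~\ref{prop2}) to reduce $\rho_{\bf A}$ to Bell-diagonal form, parametrizes $\tilde{\bf T}$ by the Bell-state probabilities, reads off $\omega$ as the largest pairwise sum of those probabilities, and then simply \emph{states} the passive energy $E(\chi^p_{\bf A})=\tfrac{1}{4}(1+\sqrt{\lambda})(E_1+E_2)+\tfrac{1}{4}(1-\sqrt{\lambda})(E_3+E_4)$ without spelling out why $\sqrt{\lambda}$ is the optimal value. You instead specialize Theorem~\ref{theo3} directly at ${\bf r}_1={\bf r}_2={\bf 0}$, observe that the four candidate spectral values collapse to $\tfrac{1}{4}(1\pm x)$ with $x=|{\bf m}^\intercal{\bf T}|$ each of multiplicity two, and carry out the minimization explicitly: the coefficient $E_1+E_2-E_3-E_4\le 0$ forces maximization of $x$, and $\max_{|{\bf m}|=1}x^2={\bf m}^\intercal{\bf T}{\bf T}^\intercal{\bf m}=\lambda$ by the Rayleigh-quotient characterization. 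This supplies the optimization step the paper leaves implicit and avoids the Bell-diagonal decomposition entirely; what the paper's route buys in exchange is the concrete identification $2\omega-1=\max\{p_1+p_3,\,p_2+p_3,\,p_1+p_2\}$ in terms of Bell-state weights. Your checks on the sign of the coefficient and on $\sqrt{\lambda}\le 1$ (so the decreasing rearrangement is consistent) are exactly the right points to worry about, and both go through.
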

\begin{proof}
A most general two-qubit state with maximally mixed marginals can be written as $\rho_{\bf A}=\frac{1}{4}\left[\id+\bsa^\intercal\bT\bsb\right]$. This state can always be written as convex mixture of Bell state by applying local unitary $V_1\otimes W_2$, i.e.,
\begin{align*}
\sigma_{\bf A}&=(V_1\otimes W_2)\rho_{\bf A}(V_1^{\dagger}\otimes W_2^{\dagger})=\frac{1}{4}\left[\id+\bsa^\intercal\bTT\bsb\right]\\
&=p_1\ket{\phi^+}\bra{\phi^+}+p_2\ket{\phi^-}\bra{\phi^-}+p_3\ket{\psi^+}\bra{\psi^+}\\&\hspace{2cm}+p_4\ket{\psi^-}\bra{\psi^-},
\end{align*}
where $\bTT ={\sf diag}(2(p_1+p_3)-1,2(p_2+p_3)-1),2(p_1+p_2)-1)$. Denoting, $\omega=\max\{(p_1+p_3),(p_2+p_3),(p_1+p_2)\}$, we have $\sqrt{\lambda_{\max}\left(\bT\transp\bT\right)}=2\omega-1$. Let us arrange the probabilities in decreasing order, $\{p_i\}\mapsto\{p^\downarrow_i\}$ with  $p^\downarrow_1\geq p^\downarrow_2\geq p^\downarrow_3 \geq p^\downarrow_4$. The energy of the $\sigma_{\bf A}^p$ and $\chi_{\bf A}^p$ are given by
\begin{align*}
E(\sigma_{\bf A}^p) =&\sum_{i=1}^4p_i^\downarrow E_i,\\
E(\chi_{\bf A}^p)=&\frac{1}{4}\left(1+\sqrt{\lambda_{\max}\left(\bT\transp\bT\right)}\right)(E_1+E_2) \\&+\frac14\left(1-\sqrt{\lambda_{\max}\left(\bT\transp\bT\right)}\right)(E_3+E_4)\\
=&\frac{\omega}{2}(E_1+E_2)+\frac{(1-\omega)}{2}(E_3+E_4).
\end{align*}
Thus we have,
\begin{align*}
\mathbb{H}_{GN^0_c}(\rho_{\bf A})=\frac{\omega}{2}(E_1+E_2)+\frac{1-\omega}{2}\left(E_3+E_4\right)-\sum_ip_i^\downarrow E_i.
\end{align*}
This completes the proof.
\end{proof}

\begin{figure}[tbh!]
\centering
\includegraphics[width=0.488\textwidth]{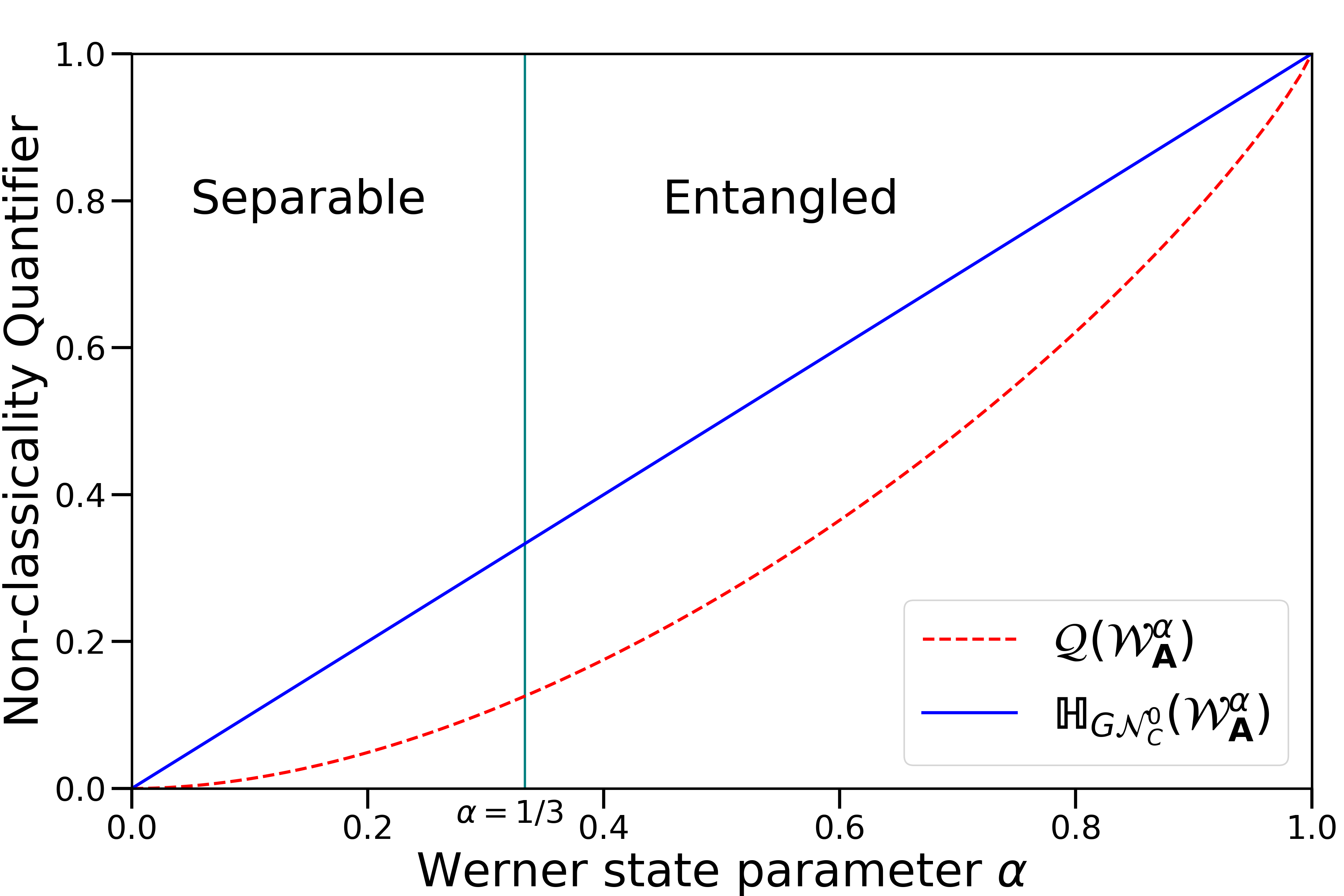}
\caption{Nonlocal energy locking in two qubit Werner states $\mathcal{W}^\alpha_{\bf A}$. Red dashed curve denotes the original quantum discord $\mathcal{Q}(\mathcal{W}^\alpha_{\bf A})=\frac{1-\alpha}{4}\log(1-\alpha)-\frac{1+\alpha}{2}\log(1+\alpha)+\frac{1+3\alpha}{4}\log(1+3\alpha)$ establishing the states are nonclassical for $\alpha\in(0,1]$ \cite{Ollivier2001} (see also \cite{Luo2008}). Blue curve amounts to the ergodiscord $\mathbb{H}_{G\mathcal{N}^0_C}(\mathcal{W}^\alpha_{\bf A})~[=\alpha\epsilon_2/2]$, i.e.~the energy locked into the nonclassical correlation of the state $\mathcal{W}^\alpha_{\bf A}$. Here the plot is given by fixing $\epsilon_2=2$.}
\label{figs2}\vspace{-.3cm}
\end{figure}

Using this result we derive and plot the ergodiscord for Werner class of state along with its quantum discord. Corollary 1, when applied to Werner class of states $\mathcal{W}^{\alpha}_{\bf A}= \alpha \phi^{max}_{\bf A}+(1-\alpha)\mathbf{I}_{\bf A}/4$, letting the Hamiltonian to be $h_{\bf A}=h_{A_1}\otimes\mathbf{I}_{A_2}+\mathbf{I}_{A_1}\otimes h_{A_2}$, with $h_{A_1}:=\epsilon_1\ket{1}\bra{1}~\&~h_{A_2}:=\epsilon_2\ket{1}\bra{1}$, we obtain $\mathbb{H}_{G\mathcal{N}^0_C}(\mathcal{W}^\alpha_{\bf A})=\alpha\epsilon_2/2$. Following  facts are noteworthy:

The ergodiscord of the Werner state is independent of $\epsilon_1$, indicating that the energy difference between the two subsystems does not influence nonlocal energy locking. This is in stark contrast to the one-parameter family of states $\rho_{\bf A}^{\alpha}$ (see Figure \ref{fig2}), where increasing the energy gap between subsystems significantly alters the nature of energy locking.

From Figure \ref{figs2}, it is evident that for the Werner class of states, both quantum discord and ergodiscord increase monotonically with the parameter $\alpha$. However, this behavior does not hold for $\rho^{\alpha}_{\bf A}$. Figure \ref{fig2} reveals that within certain ranges of $\alpha$, quantum discord exhibits monotonic growth, while ergodiscord decreases. Identifying the class of states for which the monotonicity of these two quantifiers aligns presents an intriguing avenue for further exploration.

\section{Independence of ergodiscord and quantum discord}
While both ergodiscord and quantum discord faithfully capture non-classicality of bipartite quantum states, our next result shows that they reflect independent notions.
\begin{theorem}\label{theo4}
There exist bipartite states $\rho_{\bf A},\sigma_{\bf A}\in\mathcal{D}(\mathbb{C}^{2}\otimes\mathbb{C}^{2})$ such that, $\mathcal{Q}(\rho_{\bf A})<\mathcal{Q}(\sigma_{\bf A})$ but $\mathbb{H}_{G\mathcal{N}^0_C}(\rho_{\bf A})>\mathbb{H}_{G\mathcal{N}^0_C}(\sigma_{\bf A})$.
\end{theorem}
\begin{proof}
Consider the one parameter family of states \(\rho^\alpha_{\bf A}:=\alpha\phi^{\max}_{\bf A}+(1-\alpha)\ket{01}_{\bf A}\bra{01}\), with \(\alpha\in[0,1]\), $\phi^{\max}_{\bf A}:=\ket{\phi^{\max}}_{\bf A}\bra{\phi^{\max}}$, and \(\ket{\phi^{max}}:=(\ket{00}+\ket{11})/\sqrt{2}\). Let system's Hamiltonian be $h_{\bf A}=h_{A_1}\otimes\mathbf{I}_{A_2}+\mathbf{I}_{A_1}\otimes h_{A_2}$, with $h_{A_1}:=\epsilon_1\ket{1}\bra{1}~\&~h_{A_2}:=\epsilon_2\ket{1}\bra{1}$. As depicted in Fig. \ref{fig2}, for suitable choices of $\epsilon_1,\epsilon_2$, the states possess the claimed feature. 
\end{proof}
Not only does ergodiscord captures a distinct notion of nonclassicality as compared to the original quantum discord, but the set of free operations appropriate for bipartite quantum discord is also inadequate for it. Consider a bipartite system \(\mathbf{A}\) composed of two subsystems \(A_1\) and \(A_2\). It is well established that, within the resource theory of quantum discord from \(A_1 \mapsto A_2\), the set of free operations must, at a minimum, include all completely positive trace-preserving (CPTP) maps acting on the \(A_2\) subsystem \cite{Piani2012}. However, this does not hold in the case of ergodiscord as established in the following.

\begin{proposition}\label{prop3}
For every $\alpha\in[0,1]$, there exist a bipartite quantum state $\ket{\psi^\alpha}=\sqrt{1-\alpha/2}\ket{00}+\sqrt{\alpha/2}\ket{11}$ and a CPTP map $\Lambda^\alpha$ of the form
\begin{align*}
&\Lambda^\alpha\left(\ketbra{0}{0}\right) = \frac{\alpha}{2-\alpha}\ketbra{0}{0}+\frac{2(1-\alpha)}{2-\alpha}\ketbra{1}{1},\\ &\Lambda^\alpha\left(\ketbra{0}{1}\right)= \frac{\alpha}{\sqrt{\alpha(2-\alpha)}}\ketbra{0}{1},\\
&\Lambda^\alpha\left(\ketbra{1}{0}\right) = \frac{\alpha}{\sqrt{\alpha(2-\alpha)}}\ketbra{1}{0},\\&\Lambda^\alpha\left(\ketbra{1}{1}\right) = \ketbra{1}{1},
\end{align*}
such that $\mathbf{I}_{A_1}\otimes\Lambda^\alpha_{A_2}(\psi^\alpha_{\bf A})=\rho^\alpha_{\bf A}$, where $\psi^{\alpha}_{\bf A}=\ket{\psi^\alpha}_{\bf A}\bra{\psi^\alpha}$.
\end{proposition}

\begin{figure}[tbh!]
\centering
\includegraphics[width=0.489\textwidth]{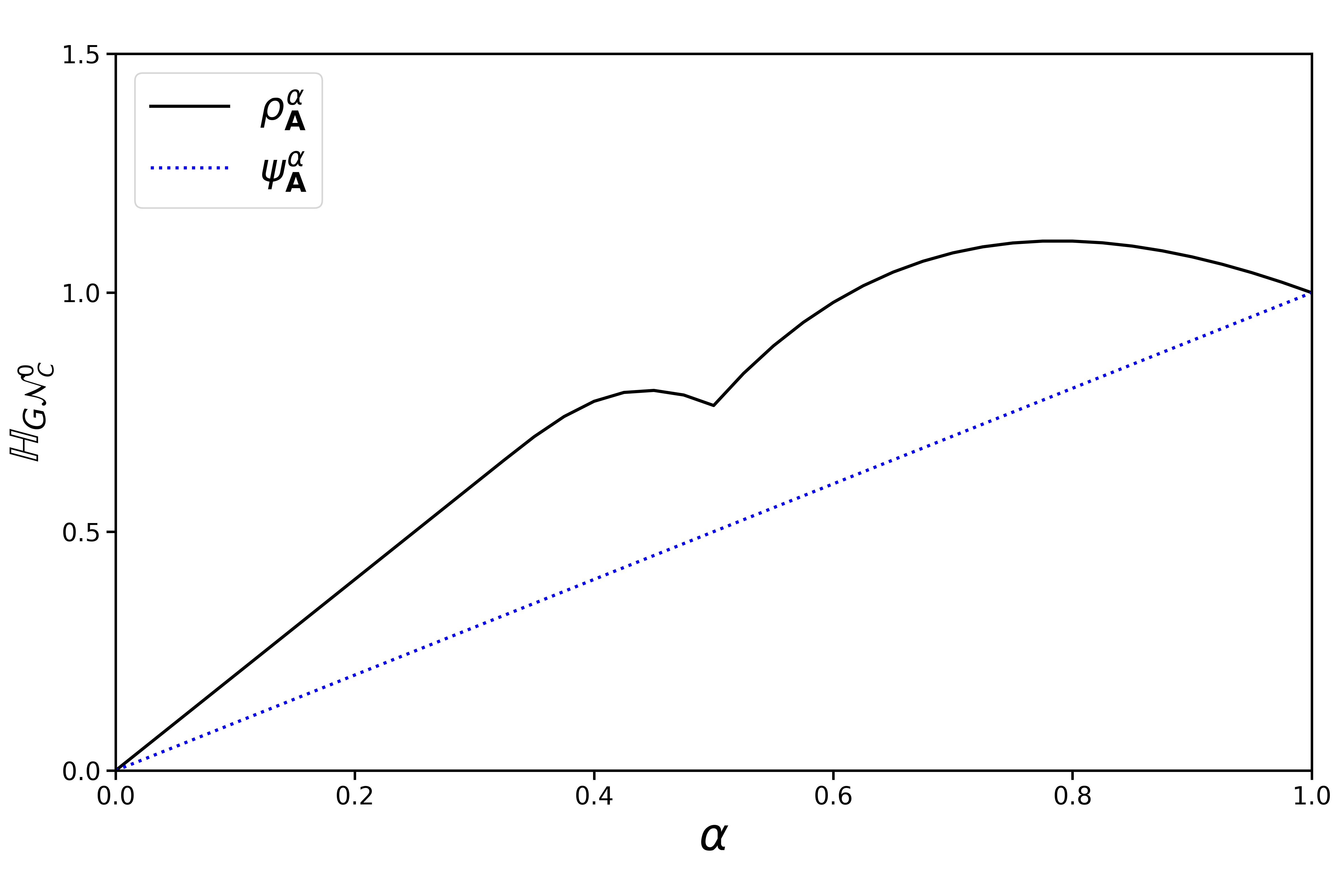}
\caption{The set of free operation of quantum discord and ergodiscord can not be the same. In the figure ergodiscord for two states $\psi^\alpha$ and $\rho_{\bf A}^\alpha$ has been plotted for  Hamiltonian $h_{\bf A}=h_{A_1}\otimes\mathbf{I}_{A_2}+\mathbf{I}_{A_1}\otimes h_{A_2}$, with $h_{A_1}:=6\ket{1}\bra{1}~\&~h_{A_2}:=2\ket{1}\bra{1}$. The solid black curve represents $\mathbb{H}_{G\mathcal{N}^0_{C}}(\rho^{\alpha}_{\bf A})$ and the blue dotted curve corresponds to $\mathbb{H}_{G\mathcal{N}^0_{C}}(\psi^{\alpha}_{\bf A})$. Proposition \textbf{3} states that $\mathbf{I}_{A_1}\otimes\Lambda^\alpha_{A_2}(\psi^\alpha_{\bf A})=\rho^\alpha_{\bf A}$. From the figure it is easy to see that $\mathbb{H}_{G\mathcal{N}^0_{C}}(\rho^{\alpha}_{\bf A})\geq\mathbb{H}_{G\mathcal{N}^0_{C}}(\psi^{\alpha}_{\bf A})~\forall~\alpha$. Hence, local CPTP on $A_2$ part can increase the value of nonlocal energy locking from $A_1\mapsto A_2$. Therefore, unlike quantum discord, not all local CPTP on $A_2$'s part qualify as free operation of ergodiscord.}
\label{figs3}
\end{figure}
\begin{proof}
Now, from Fig.~\ref{figs3}, one can easily see that $\mathbb{H}_{G\mathcal{N}^0_{C}}(\rho^{\alpha}_{\bf A})\geq\mathbb{H}_{G\mathcal{N}^0_{C}}(\psi^{\alpha}_{\bf A})~\forall~\alpha$. It simply implies that the class of CPTP maps $\Lambda^\alpha$ can not be a free operation when one deals with ergodiscord from $A_1\mapsto A_2$. Whereas, in case of quantum discord from $A_1\mapsto A_2$ all CPTP maps on $A_2$ part, including all the $\Lambda^\alpha$'s, are free. This establishes the claim.
\end{proof}
Few observations as listed in the following remarks are in order:
\begin{remark}
Local CPTP on $A_2$-part, however, can not increase energy locking from zero to non-zero. This follows from the fact that zero energy locking implies that the bipartite state is of the classical-quantum(CQ) form. If $\rho_{\bf A}$ be such CQ state, local CPTP on $A_2$-part can not make it a quantum-quantum or quantum-classical state. In other words, the set of all such CQ states is closed under any CPTP map on $A_2$ subsystem. This further highlights the faithfulness of ergodiscord.   
\end{remark}
\begin{remark}
The ergodiscord of a bipartite system increases under local CPTP operations on \( A_2 \); however, unlike the geometric discord \cite{Dakic2010} based on Hilbert-Schmidt norm, this increase is inherently bounded. Specifically, the ergodiscord is always upper-bounded by the total energy of the system's Hamiltonian. For instance, consider a two-qubit state \( \rho_{\bf A} \) with Hamiltonians \( h_{A_1} = \epsilon_1 \ket{1}\bra{1} \) and \( h_{A_2} = \varepsilon_2 \ket{1}\bra{1} \). After applying a local CPTP operation on subsystem \( A_2 \), the ergodiscord remains trivially bounded above by \( \epsilon_1 + \epsilon_2 \). This avoids the unphysical behavior exhibited by geometric discord, whose increase is unbounded \cite{Piani2012}. Moreover, while geometric discord is purely a mathematical construct with no clear physical interpretation, ergodiscord has a well-defined physical meaning that can be experimentally verified. 
\end{remark}
\begin{remark}
The resource-theoretic formulations of ergodiscord and quantum discord are fundamentally different. Although both share the same set of free states, their free operations differ due to the distinct operational and physical contexts they address. A parallel can be drawn to the resource theories of coherence and asymmetry, which also share the same set of free states but differ in their free operations \cite{Bartlett2007,Winter2016,Piani2016}.
\end{remark}
\begin{remark}
One may argue that if the system Hamiltonian changes, the value of ergodiscord will also change for the same system. This can be understood as follows: although ergodiscord quantifies correlations, it implicitly depends on the observable (i.e., energy), and consequently, its value varies with the Hamiltonian. As a physically motivated quantity, ergodiscord is a byproduct of both the quantum state and the Hamiltonian. To address this arbitrariness, one might normalize ergodiscord by dividing it by the maximum energy eigenvalue minus the minimum energy eigenvalue of the Hamiltonian ($[\epsilon^A_{\max}+\epsilon^B_{\max}]-[\epsilon^A_{\min}+\epsilon^B_{\min}]$). This normalization ensures that the quantity remains zero for non-discordant states and finite for arbitrary Hamiltonians in the presence of discord. From a physical perspective, this normalized measure would represent the ratio of energy locked in correlations to the maximum ergotropy.
\end{remark}

\section{Conclusion}
In this work, we build an elegant operational approach to study nonclassical correlations. Our approach relies on the different probing method one can apply on a system consisting of specially separated subsystems. The success of our approach lies in the realization that when specially separated systems are correlated, that very correlation can lock away the values of observable quantities from an observer who is constrained to restricted forms of probing mechanisms. The power of this simple fact is exhibited in Observation~\ref{obs1}. However, the scenario becomes a bit tricky once one aims to study nonclassical correlations. One needs to properly define what `classical probing' means in the context of our approach, which mathematically is nothing but a subset of classical channels shared between the spatially separated sub-parts. Correctly identifying the subset relies on the observable one wants to measure (please refer to Fig.~\ref{figs5} in the appendix). If a resource theoretic formalism of the same is available, the job is relatively simpler. However, we go on to show the generality of our approach by building on the framework for observables (ergotropy) which, to the best of our knowledge, does not have any resource theory of its own. We term this nonclassicality as ergodiscord and observed an intriguing phenomenon called nonlocal locking of energy.

The phenomenon of energy locking, apart from its foundational importance, highlights the nontrivial utility of nonclassical correlations in designing efficient quantum batteries \cite{Ferraro2018,Andolina2019,Yang2023}. In multi-cell quantum batteries, stored energy can be lost due to factors such as decoherence and interactions, both local and global. For isolated systems evolving under unitary processes, energy drainage occurs solely through local interactions when inter-system interactions are absent. Our findings demonstrate that classically correlated states cannot store energy when only local interactions and classical communication channels are available between pairs of isolated quantum batteries. Theorem~\ref{theo2} establishes that for systems with non-degenerate Hamiltonians, nonclassical correlations quantified by ergodiscord enable efficient energy storage under these constraints. Moreover, Theorem~\ref{theo4} reveals the intriguing result that mixed states can store more energy than the maximally entangled state of the same system. Importantly, the notion of ergodiscord is different than the concept proposed in \cite{Zurek2003}, which in literature is also known as thermal discord \cite{Jimnez2019}. Recent works, such as Refs.~\cite{Francica2017,Bernards2019}, have also examined the role of nonclassical correlations in ergotropic work extraction. However, the measures used in those studies are not faithful indicators of nonclassical correlations. In contrast, ergodiscord provides a reliable measure with clear operational significance. While our analysis primarily focuses on bipartite systems, the generality of our framework opens avenues for exploring energy storage in multipartite quantum systems.

\begin{acknowledgements}
MA acknowledges the funding supported by the European Union  (Project QURES- GA No. 101153001). SRC acknowledges support from University Grants Commission, India (reference no. 211610113404). GA acknowledges support by the UK Research and Innovation (UKRI) under EPSRC Grant No.~EP/X010929/1. MB acknowledges funding from the National Mission in Interdisciplinary Cyber-Physical systems from the Department of Science and Technology through the I-HUB Quantum Technology Foundation (Grant no: I-HUB/PDF/2021-22/008).
\end{acknowledgements}
\onecolumngrid
\section*{Appendix I: More On Observation \textbf{$1$}}
A noisy operation $\phi_\gamma^{no}:\mathcal{D}(\mathbb{C}^{d_\gamma})\mapsto\mathcal{D}(\mathbb{C}^{d_\gamma})$ is defined as
\begin{align}
\phi_\gamma^{no}(\rho_\gamma):=\Tr_{z}\left[U_{kz}\left(\rho_\gamma\otimes\frac{\mathbf{I}_z}{d_z}\right)U^\dagger_{kz}\right].   
\end{align}
Consider now the function  $f_\gamma(\rho_\gamma)\equiv f_\gamma^{pe}(\rho_\gamma):=\log d_\gamma-S(\rho_\gamma)$ which satisfies the following properties \cite{Horodecki2003(1)}:
\begin{itemize}
\item[(p1)] $0\le f^{pe}(\rho_\gamma)\le\log d_\gamma$ for a state $\rho_\gamma\in\mathcal{D}(\mathbb{C}^{d_\gamma})$;
\item[(p2)] $f^{pe}$ remains invariant under unitaries, {\it i.e.}, $f^{pe}\left(u_\gamma\rho u_\gamma^\dagger\right)=f^{pe}(\rho_\gamma)$;  
\item[(p3)] $f^{pe}$ is non-increasing under noisy operations, {\it i.e.}, $f^{pe}\left(\phi_\gamma(\rho)\right)\le f^{pe}(\rho_\gamma),~\forall~\phi_\gamma\in\Phi^{no}_\gamma$.
\end{itemize} 
The variation of the function $f^{pe}$ over the set of noisy operations $\Phi^{no}$ thus becomes 
\begin{align}
\Delta (f^{pe})_\gamma^{\Phi^{no}_\gamma}(\rho_\gamma)&:=\underset{\phi_\gamma\in\Phi^{no}_\gamma}{\mbox{max}}\left[f^{pe}_\gamma\left(\rho_\gamma\right)-f^{pe}_\gamma\left(\phi_\gamma\left(\rho_\gamma\right)\right)\right]\nonumber\\
&=f^{pe}_\gamma(\rho_\gamma)-\underset{\phi_\gamma\in\Phi^{no}_\gamma}{\mbox{min}}\left[f^{pe}_\gamma\left(\phi_\gamma\left(\rho_\gamma\right)\right)\right]\nonumber\\
&= f^{pe}_\gamma(\rho_\gamma). \label{delmax}  
\end{align}
The last line follows from the fact that under noisy operation the minimum value of $f^{pe}_\gamma\left(\phi_\gamma\left(\rho_\gamma\right)\right)$ can be achieved to zero by evolving the state $\rho_\gamma$ to the completely mixed state $\mathbf{I}_\gamma/{d_\gamma}$. Now, for the bipartite case $({\bf A}\equiv A_1A_2)$ we have
\begin{align}
\mathbb{F}_{GL}^{pe}(\rho_{\bf A})&= f^{pe}_{\bf A}\left(\rho_{\bf A}\right)-f^{pe}(\rho_1)-f^{pe}_2(\rho_2)\nonumber\\
&=\log(d_1d_2)-S(\rho_{\bf A})-\log d_1+S(\rho_1)-\log d_2+S(\rho_2)\nonumber\\
&=S(\rho_1)+S(\rho_2)-S(\rho_{\bf A})=I(A_1:A_2).
\end{align}
This establishes the claim in Observation \textbf{$1$}.

\section*{Appendix II: Proof of Proposition \textbf{$1$}}
\begin{proof}
Let $\vec p = (p_1,p_2,\cdots p_d)$ and $\vec q =  (q_1,q_2,\cdots q_d)$ are two probability vectors with their elements arranged in increasing order. The vector $\vec p$ majorizes $\vec q$, denoted as $\vec p\succ\vec q$, if and only if \cite{Marshall2011} 
\begin{align}\label{propeq2}
\sum_{i\in[l]} \left(p_i-q_i\right) \leq 0,~\forall~l\in[d-1],~~\&~~
\sum_{i\in [d]} \left(p_i-q_i\right) = 0~.   
\end{align} 
Consider a density operator $\rho\in\mathcal{D}(\mathbb{C}^{d})$ and a unital CPTP map $\Lambda:\mathcal{D}(\mathbb{C}^{d})\mapsto\mathcal{D}(\mathbb{C}^{d})$, with $\vec p $ and $\vec q $ denoting eigenvalues (arranged in increasing order) of $\rho$ and $\Lambda(\rho)$, respectively. Clearly, the vector $\vec p$ majorizes $\vec q$ \cite{Gour2015}. 
Whenever $\vec p\succ\vec q$, the majorization conditions ensure that $-\infty<p_i\log q_i\le0,~\forall~i\in[d]$. Thus we have,
\begin{align}\label{propeq3}
\sum_{i\in[d]}(q_i-p_i)\log q_i&=\sum_{i=1}^{d}(q_i-p_i)\log q_i\nonumber\\
&= \left(q_1-p_1\right)\log q_1+\left(q_2-p_2\right)\log q_2+\cdots+\left(q_d-p_d\right)\log q_d\nonumber \\
&=\left(q_1-p_1\right)\log q_1+\Big((q_1+q_2)-(p_1+p_2)\Big)\log q_2+\cdots+\left(\sum_{i\in[d]}q_i-\sum_{i\in[d]}p_i\right)\log q_d \nonumber \\
&~~-\left\lbrace (q_1-p_1)\log q_2+\Big((q_1+q_2)-(p_1+p_2)\Big)\log q_3+\cdots+\left(\sum_{i\in[d-1]}q_i-\sum_{i\in[d-1]}p_i\right)\log q_d\right\rbrace\nonumber\\
&=\left(q_1-p_1\right)\left(\log q_1-\log q_2\right)+\Big((q_1+q_2)-(p_1+p_2)\Big)\left(\log q_2-\log q_3\right)+\cdots\nonumber\\
&~~~~~~~~~~~+\left(\sum_{i\in[d-1]}q_i-\sum_{i\in[d-1]}p_i\right)\left(\log q_{d-1}-\log q_d\right)+\left(\sum_{i\in[d]}q_i-\sum_{i\in[d]}p_i\right)\log q_d\nonumber \\
& = \sum_{l=2}^d\left\{\sum_{i\in[l-1]}\left(q_i-p_i\right)\right\}\Big(\log q_{l-1}-\log q_{l}\Big)+\left(\sum_{i\in[d]}(q_i-p_i)\right)\log q_d.
\end{align}
Now using Eq.(\ref{propeq2}) and the fact that $\log$ is a monotonically increasing function, from Eq.(\ref{propeq3}) we have,
\begin{align}\label{propeq4}
\sum_{i\in[d]}^{d}(q_i-p_i)\log q_i\le0.
\end{align}
Now, suppose that $S(\Lambda(\rho))=S(\rho)$, i.e., $\sum_{i\in[d]} q_i\log q_i=\sum_{i\in[d]} p_i\log p_i$. Then again using the fact $-\infty<p_i\log q_i\le0,~\forall~i\in[d]$, we have,
\begin{align}\label{propeq1}
\sum_{i\in[d]} q_i\log q_i-\sum_{i\in[d]} p_i\log q_i&=\sum_{i\in[d]} p_i\log p_i-\sum_{i\in[d]} p_i\log q_i,\nonumber\\
\Rightarrow \sum_{i\in[d]} \left(q_i-p_i\right)\log q_i &= \sum_{i\in[d]} p_i\left(\log p_i-\log q_i\right),\nonumber\\
\Rightarrow \sum_{i\in[d]} \left(q_i-p_i\right)\log q_i &=H\left(\vec p || \vec q\right)\geq 0.
\end{align}
The last inequality follows from the non negativity of relative entropy $H(\vec{p}||\vec{q})$. Now comparing Eq.(\ref{propeq4}) and Eq.(\ref{propeq1}), we have 
\begin{align}
H\left(\vec p || \vec q\right)=0\Rightarrow  \vec p = \vec q.
\end{align}
This completes the proof. 
\end{proof}

\section*{Appendix III: Proof of Theorem \textbf{$1$}}
Before proving Theorem $1$, in the following we first characterize the the set of free channel $\mathcal{N}^0_C$ through which $A_1$ sub-part is send to $A_2$. 

A quantum channel $N:\mathcal{D}(\mathbb{C}^d)\mapsto\mathcal{D}(\mathbb{C}^d)$ is called a measure-and-prepare channel if it is of the form 
\begin{align}
N(\rho)=\sum_{i\in[n]}\tr(\pi_i\rho)\ket{\psi_i}\bra{\psi_i},    
\end{align}
where the $\pi_i\ge0~\&~\sum_{i\in[n]}\pi_i=\mathbf{I}_d$, where $n$ can take any integer value. Notably, such a channel breaks entanglement and hence has zero quantum capacity \cite{Horodecki2003}. Even though, theses channels can not send any quantum information we will not consider them to be classical channels. A channel $N$ is called classical channel if it carries information of distinct classical symbols only. Set of such channels $\mathcal{N}_C$ are proper subset of measure-and-prepare channel and takes the form   
\begin{align}\label{cc1}
N(\rho)=\sum_{i\in[n]}\tr(\pi_i\rho)\ket{\psi_i}\bra{\psi_i},    
\end{align}
where $n\le d$ and $\langle\psi_i|\psi_{i^\prime}\rangle=\delta_{ii^\prime}$. However, for the choices $(f^{pe}_\gamma,\Phi^{no}_\gamma)$ not all $\mathcal{N}_C$ are free. In the following lemma we characterize the set of free channel $\mathcal{N}^0_C\subset\mathcal{N}_C$.
\begin{manuallemma}{S$1$}\label{lemma1} 
In resource theory of purity a free classical channel $N\in\mathcal{N}^0_C$ takes the form 
\begin{align}
N(\rho)=\sum_{i\in[d]}\tr(\pi_i\rho)\ket{\psi_i}\bra{\psi_i},   
\end{align}
where $\langle\psi_i|\psi_{i^\prime}\rangle=\delta_{ii^\prime}$ and $\tr(\pi_i)=1,~\forall~i\in[d]$.
\end{manuallemma}
\begin{proof}
Recall that $S(\Lambda^u(\rho))\ge S(\rho)$ for a unital CPTP map $\Lambda^u$, and hence $f^{pe}(\Lambda^u(\rho))\le f^{pe}(\rho)$. On the other hand, for any non-unital CPTP map $\Lambda^{nu}$, we have $f^{pe}(\Lambda^{nu}(\mathbf{I}/d))> f^{pe}(\mathbf{I}/d)=0$, and therefore they are not free. Thus any free classical channel $N\in\mathcal{N}^0_C$ must satisfy $N(\mathbf{I}/d)=\mathbf{I}/d$, which yields the form of $N$ as in Lemma.     
\end{proof}
\subsection*{Proof of Theorem \textbf{$1$}}
\begin{proof} (Theorem \textbf{$1$})\\  
{\bf if part:} Here we will prove that for every CQ state 
\begin{align}\label{prop1eq1}
\rho_{\bf A}=\sum p_{i_1i_2}\psi^{i_1}_1\otimes\psi^{i_2}_2,~ \mbox{with}~ \langle\psi^{i_1}_1|\psi^{i^\prime_1}_1\rangle=\delta_{i_ii^\prime_1},   
\end{align}
we have $\mathbb{F}^{pe}_{G\mathcal{N}^0_C}(\rho_{\bf A})=0$. 

As have already discussed in the main text 
\begin{align}
\Delta f^{\{\Phi^{no}_\gamma\}}_{\mathcal{N}_C^0}(\rho_{\bf A})\leq \Delta f^{\Phi^{no}_{\bf A}}_{\bf A}(\rho_{\bf A}),~~\forall~ \rho_{\bf A}\in\mathcal{D}(\mathbb{C}^{d_1}\otimes\mathbb{C}^{d_2}).   
\end{align}
Also recall that, evaluating $\Delta f^{\{\Phi^{no}_\gamma\}}_{\mathcal{N}_C^0}(\rho_{\bf A})$ essentially contains two key steps: first applying $\phi_1\in\Phi^{no}_1$ in $A_1$'s local lab, and then sending the $A_1$ sub-part to the $A_2$ party through the channel $N \in \mathcal{N}_C^0$. The channel $N$ has the following form (follows from Lemma \ref{lemma1})
\begin{align} \label{prop1eq2}
&N(\rho_1)=\sum_{l\in[d_1]}\tr(\pi_l\rho_1)\ket{\alpha^l}_1\bra{\alpha^l},~\mbox{with}~\langle\alpha^l|\alpha^{l^\prime}\rangle=\delta_{ll^\prime},\\  
&\mbox{and}~\{\pi_l\}~\mbox{is a POVM with}~\tr(\pi_l)=1,~\forall~l\in[d_1].\nonumber
    \end{align}
Given a CQ state $\rho_{\bf A}$ of the form of Eq.(\ref{prop1eq1}), let no $\phi_1$ is applied on the $A_1$ part, rather part is sent through the free channel 
\begin{align}
N^\star(\rho_1)=\sum_{l\in[d_1]}\tr(\psi^{l}\rho_1)\ket{\psi^{l}}_1\bra{\psi^{l}},
\end{align}
where $\{\ket{\psi^l}_1\}_{l=1}^{d_1}$ are the states appearing in the $A_1$ part of expression Eq.(\ref{prop1eq1}). Since $N^\star\otimes\mathcal{I}(\rho_{\bf A})= \rho_{\bf A}$, therefore the CQ state $\rho_{\bf A}$ will reach in $A_2$'s lab completely unchanged; here $\mathcal{I}$ denotes the identity map. Thus, the extracted value of $\Delta f^{\{\Phi^{no}_\gamma\}}_{\mathcal{N}_C^0}(\rho_{\bf A})$ achieves it maximum value $\Delta f^{\Phi^{no}_{\bf A}}_{\bf A}(\rho_{\bf A})$, implying $\mathbb{F}^{pe}_{G\mathcal{N}^0_C}(\rho_{\bf A})=0$. This completes the {\it if part} of the proof. \\\\
{\bf only if part:}  Starting with an arbitrary state $\chi_{\bf A}$ we will now show that
\begin{align}\label{oi1}
\mathbb{F}^{pe}_{G\mathcal{N}^0_C}(\chi_{\bf A}):=\underbrace{\Delta \left(f^{pe}\right)_{\bf A}^{\Phi_{\bf A}}(\chi_{\bf A})}_{1^{st}~term}-\underbrace{\Delta \left(f^{pe}\right)^{\{\Phi_\gamma\}}_{\mathcal{N}_C^0}(\chi_{\bf A})}_{2^{nd}~term}=0~\implies~ \chi_{\bf A}~ \mbox{is a CQ state}. 
\end{align}
Recall that a channel $\Lambda:\mathcal{D}(C^{d})\to\mathcal{D}(C^{d})$ is called unital if it maps completely mixed state to itself, {\it i.e.}, $\Lambda\left(\mathbf{I}/d\right)=\mathbf{I}/d$. Set of all such channels is denoted by $\mathcal{U}(d)$. Since for a $\Lambda\in\mathcal{U}(d)$ we have $f^{pe}(\Lambda(\rho))\le f^{pe}(\rho),~\forall\rho\in\mathcal{D}(C^{d})$, theses channels are considered to be free. On the other hand, as shown in Lemma \ref{lemma1}, a free classical channel $N\in\mathcal{N}^0_C$ reads as
\begin{align}\label{oi2}
&N(\rho_1):=\sum_{l\in[d_1]}\tr(\pi_l\rho_1)\ket{\alpha^l_1}\bra{\alpha^l_1},~\mbox{where}~ \langle\alpha^l_1|\alpha^{l^\prime}_1\rangle=\delta_{ll^\prime},\\
&\mbox{and}~\{\pi_l\}~\mbox{is a POVM with}~\tr(\pi_l)=1,~\forall~l\in[d_1].\nonumber
\end{align}
Manifestly it follows that $\mathcal{N}^0_C\subset\mathcal{U}(\mathbb{C}^{d_1})$. Now, the "$1^{st}~term$" in Eq.(\ref{oi1}) simply boils down to 
\begin{align}\label{oi5}
\Delta \left(f^{pe}\right)_{\bf A}^{\Phi_{\bf A}}(\chi_{\bf A})=\log(d_1d_2)-S(\chi_{\bf A}).   
\end{align}
For the "$2^{nd}~term$" let us recall its expression:
\begin{align}\label{oi3}
\Delta \left(f^{pe}\right)^{\{\Phi_\gamma\}}_{\mathcal{N}_C^0}(\chi_{\bf A})=\max_{\left\{\phi_1\in\Phi^{no}_1,~\phi_2\in\Phi^{no}_2,~\phi_{\bf A}\in\Phi^{no}_{\bf A},~~N\in\mathcal{N}^0_C\right\}}\left[\Delta f^{\phi_1}_1(\chi_1)+\Delta f^{\phi_2}_2(\chi_2)+\Delta f^{\phi_{\bf A}}_{\bf A}(N(\sigma_{\bf A}))-\mathcal{C}(N)\right],   
\end{align}
where $\sigma_{\bf A}=(\phi_1\otimes\phi_2)(\chi_{\bf A})$. Clearly, $C(N)=0$ as $N\in\mathcal{N}^0_C$. Note that the marginal $\chi_1$ being completely mixed, makes the optimization over $\Phi^{no}_1$ irrelevant. Furthermore, the action of $\phi_2$ can be absorbed in $\phi_{\bf A}$, which in turn reduces the optimization over $\Phi^{no}_2$. Accordingly, Eq.(\ref{oi3}) modified as
\begin{align}\label{oi4}
\Delta \left(f^{pe}\right)^{\{\Phi_\gamma\}}_{\mathcal{N}_C^0}(\chi_{\bf A})=\max_{\left\{\phi_{\bf A}\in\Phi^{no}_{\bf A},~~ N\in\mathcal{N}^0_C\right\}}\left[\Delta f^{\phi_{\bf A}}_{\bf A}(N(\chi_{\bf A}))\right].   
\end{align}
From Eqs.(\ref{oi1}), (\ref{oi5}), and (\ref{oi4}), we therefore have
\begin{align}
\max_{\left\{\phi_{\bf A}\in\Phi^{no}_{\bf A},~~ N\in\mathcal{N}^0_C\right\}}\left[\Delta f^{\phi_{\bf A}}_{\bf A}(N(\chi_{\bf A}))\right]=\log(d_1d_2)-S(\chi_{\bf A}).
\end{align}
Suppose the optimization over $\mathcal{N}^0_C$ is achieved for some $N^\star$. Then we have 
\begin{align}
&\max_{\left\{\phi_{\bf A}\in\Phi^{no}_{\bf A}\right\}}\left[\Delta f^{\phi_{\bf A}}_{\bf A}(N^\star(\chi_{\bf A}))\right]=\log(d_1d_2)-S(\chi_{\bf A})\nonumber\\
&\Rightarrow S((N_1^\star\otimes\mathcal{I}_2)(\chi_{\bf A}))=S(\chi_{\bf A}).
\end{align}
Using Proposition \textbf{$1$}, we thus have
\begin{align}
\chi_{\bf A}=(N_1^\star\otimes\mathcal{I}_2)(\chi_{\bf A})=\sum_{l\in[d_1]}\tr((\pi^\star_l\otimes\mathbf{I})\chi_{\bf A})\ket{\alpha^l}_1\bra{\alpha^l}\otimes \chi_{A_2|l}.
\end{align}
Here, $\chi_{A_2|l}:=\tr_{1}\left[\left(\pi^\star_l\otimes\mathbf{I}\right)\chi_{\bf A}\right]/\tr\left[\left((\pi^\star_l\otimes\mathbf{I})\right)\chi_{\bf A}\right]$ represents the normalized state of subpart $A_2$, conditioned on the positive operator $\{\pi^\star_l\}$ acting on part $A_1$. This ensures that $\chi_{\bf A}$ conforms to the CQ form. This completes {\it only if part} of the proof, and hence the Theorem \textbf{$1$} is proved. 
\end{proof}

\section*{Appendix IV: Proof of Theorem \textbf{$2$}}
We first present few Lemmas that will be needed for proving Theorem \textbf{$2$}. We start by characterizing the relevant set of zero-cost classical communication channels $\mathcal{N}^0_C$. Let us recall that a free classical channel $N$ in resource theory of purity satisfies two conditions: (i) $f^{pe} (N(\rho)) \leq f^{pe} (\rho),~\forall~\rho$, and (ii) $\Delta f^{{\Phi^{no}}} (N(\rho)) \leq \Delta f^{{\Phi^{no}}} (\rho),~\forall~\rho$.  It is worth noting that both of these conditions solely depend on the entropy of the given state, and moreover one implies the other. However, within the observable approach, the relationship between the expectation value of an observable $\langle \mathcal{O}\rangle_{(\rho)}$ and its variation $\Delta \mathcal{O}^U (\rho)$ exhibit more complex behaviour. Hence, we demand a free classical channel $N$ to fulfill both the following two criteria: \begin{align*}
&\mbox{\bf (C1)}~\tr(\mathcal{O}N(\rho))\le\tr(\mathcal{O}\rho),~\forall~\rho\in\mathcal{D}(\mathbb{C}^d);\\  
&\mbox{\bf (C2)}~\Delta \mathcal{O}^U (N(\rho))\le \Delta \mathcal{O}^U (\rho), ~ \forall\rho\in\mathcal{D}(\mathbb{C}^{d}).
\end{align*}
In the following lemma we completely characterize the set of free channels that satisfies these two criteria.  
\begin{manuallemma}{S$2$}\label{lemma2}
Given a non-degenerate $\mathcal{O}\in$ Herm$(\mathbb{C}^{d})$, a classical channel $N$ satisfying the criteria (C1) and (C2) reads as
\begin{subequations}
\begin{align}
either~~N\left(\rho\right) &= \ket{o_1}\bra{o_1},\label{l2a}\\
or~~N\left(\rho\right) &= \sum_{l\in[d]}\tr\left(\ket{o_l}\bra{o_l}\rho\right) \ket{o_l}\bra{o_l},\label{l2b}
\end{align}   
\end{subequations}
where, $\{\ket{o_l}\}_{l\in[d]}$ are the eigenstates of $\mathcal{O}$ and forms a complete basis of $\mathbb{C}^d$.
\end{manuallemma}
\begin{proof}
Recalling from Eq.(\ref{cc1}), a classical channel reads as
\begin{align}\label{l21}
N(\rho)=\sum_{l\in[n]}\tr(\pi_l\rho)\ket{\psi_l}\bra{\psi_l},    
\end{align}
where $\langle\psi_l|\psi_{l^\prime}\rangle=\delta_{ll^\prime}$, $\pi_l\geq0~\&~\sum_{l\in[n]} \pi_l=\mathbf{I}_d$ with $n\le d$.
Let $\mathcal{O}:=\sum_{l\in[d]}\lambda_l\ketbra{o_l}{o_l}$ be the spectral representation of $\mathcal{O}$ with $\lambda_l$'s being arranged in increasing order, i.e, $\lambda_1\leq\lambda_2\leq\cdots\leq\lambda_d$. 

Consider a classical channel as of Eq.(\ref{l21}) with $n=1$ which represents the pin map $N(\rho) = \ket{\psi_1}\bra{\psi_1}$ with $\pi_1 = \mathbf{I}_d$. When the channel is fed with the state $\ket{o_1}$, the eigenstate of $\mathcal{O}$ corresponding to the lowest eigenvalue, the condition (C1) will be satisfied only if $\ket{\psi_1}=\ket{o_1}$. Moreover, this choice also turns out to be compatible with the requirement of condition (C2). Accordingly, the free channel takes the form of Eq.(\ref{l2a}).

Consider, now the case $1 < n \leq d$. Feeding the state $\rho=\ket{o_1}\bra{o_1}$ into the channel and imposing (C1) we obtain the following conclusions: $\pi_1=\ket{o_1}\bra{o_1}$, $\ket{\psi_1}=\ket{o_1}$, and $\tr(\pi_l\pi_1)=\delta_{l1},~\forall~l \in [n]$. Accordingly, the channel takes the form:
\begin{align}
N(\rho)= \tr(\ket{o_1}\bra{o_1}\rho)\ket{o_1}\bra{o_1}+\sum_{l=2}^{n}\tr(\pi_l\rho)\ket{\psi_l}\bra{\psi_l}.    
\end{align}
Moving forward, by setting $\rho=\ket{o_2}\bra{o_2}$ and imposing condition (C1), we obtain $\pi_2=\ket{o_2}\bra{o_2}$, $\ket{\psi_2}=\ket{o_2}$, and $\tr(\pi_l\pi_2)=\delta_{l2},~\forall~l \in [n]$, which thus lead to
\begin{align}
N(\rho)= \tr(\ket{o_1}\bra{o_1}\rho)\ket{o_1}\bra{o_1}+\tr(\ket{o_2}\bra{o_2}\rho)\ket{o_2}\bra{o_2}+\sum_{l=3}^{n}\tr(\pi_l\rho)\ket{\psi_l}\bra{\psi_l}.    
\end{align}
Continuing this we extend the argument for $l\in [n-1]$ and obtain
\begin{align}\label{l22}
N(\rho)= \sum_{l \in [n-1]}\tr(\ket{o_l}\bra{o_l}\rho)\ket{o_l}\bra{o_l}+\tr(\pi_n\rho)\ket{\psi_n}\bra{\psi_n}.    
\end{align}
Consider now the completely mixed state $\rho=\mathbf{I}/d$ for which the variation $\Delta\mathcal{O}^U(\mathbf{I}/d)=0$. The action of the channel (\ref{l22}) on this state yields
\begin{align}
N(\mathbf{I}/d)= \frac{1}{d}\sum_{l \in [n-1]}\ket{o_l}\bra{o_l}+\left(1-\frac{n-1}{d}\right)\ket{\psi_n}\bra{\psi_n}.    
\end{align}

\begin{align*}
\left(1-\frac{n-1}{d}\right)&\leq \frac{1}{d} \Rightarrow d \leq n ,\\
\&~\ket{\psi_n}\bra{\psi_n}&=\ket{o_n}\bra{o_n}.
\end{align*}
Since for the classical in Eq.(\ref{l21}) we have $n \leq d$, thus we conclude $n=d$. Accordingly a free classical channel $N\in\mathcal{N}^0_C$ takes the form of Eq.(\ref{l2b}). This concludes the proof. 
\end{proof}
\begin{figure}[t!]
\centering
\includegraphics[scale=0.45]{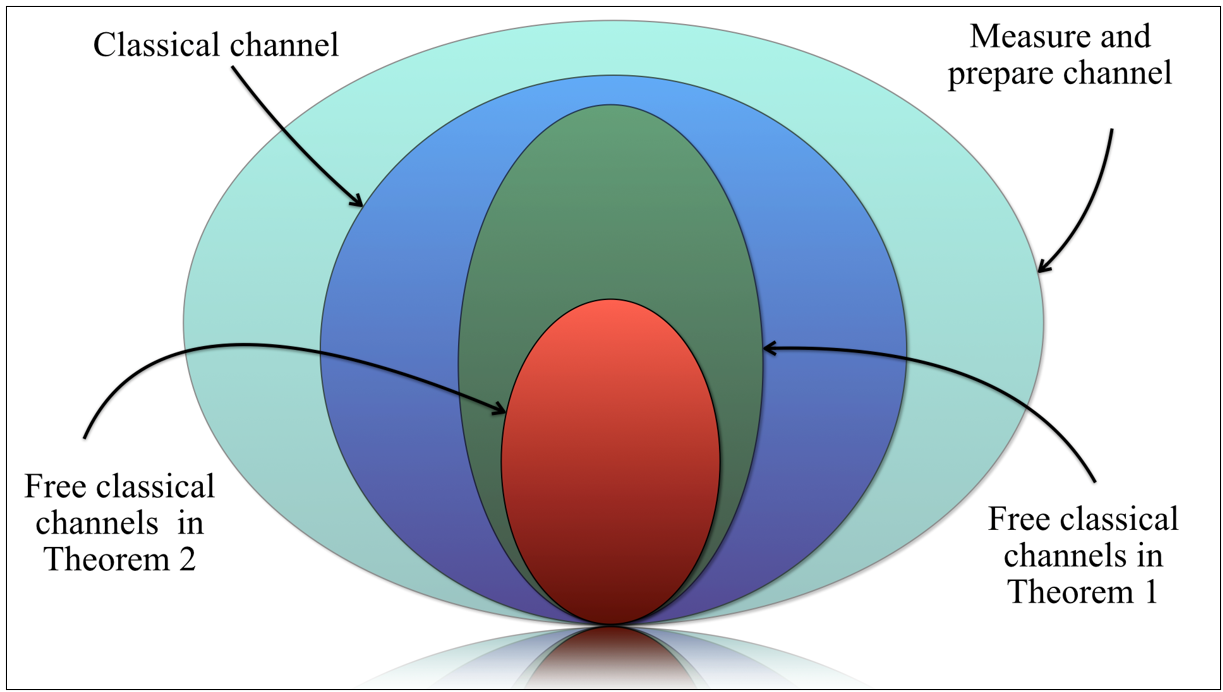}
\caption{A measure-and-prepare channel $N_{MP}:\mathcal{D}(\mathbb{C}^d)\mapsto\mathcal{D}(\mathbb{C}^d)$ reads as
$N_{MP}(\rho)=\sum_{i\in[n]}\tr(\pi_i\rho)\ket{\psi_i}\bra{\psi_i}$, where $\pi_i\ge0~\&~\sum_{i\in[n]}\pi_i=\mathbf{I}_d$, with $n$ taking integer values. Let $\mathcal{N}_{MP}$ denotes the set of all such channels. A classical channel $N_C\in\mathcal{N}_C$ takes the form  $N_C(\rho)=\sum_{i\in[n]}\tr(\pi_i\rho)\ket{\psi_i}\bra{\psi_i}$, where $n\le d$ and $\langle\psi_i|\psi_{i^\prime}\rangle=\delta_{ii^\prime}$. A free classical channel $N^{0(1)}_C\in\mathcal{N}^{0(1)}_C$ appearing in Theorem 1 reads as  $N^{0(1)}_C(\rho)=\sum_{i\in[d]}\tr(\pi_i\rho)\ket{\psi_i}\bra{\psi_i}$, where $\langle\psi_i|\psi_{i^\prime}\rangle=\delta_{ii^\prime}$ and $\tr(\pi_i)=1,~\forall~i\in[d]$ (see Lemma 1). For a non-degenerate $\mathcal{O}\in$ Herm$(\mathbb{C}^{d})$, a free classical channel $N^{0(2)}_C\in\mathcal{N}^{0(2)}_C$ appearing in Theorem 2 takes the form either $N^{0(2)}_C\left(\rho\right)= \ket{o_1}\bra{o_1}$ or $N^{0(2)}_C\left(\rho\right)= \sum_{l\in[d]}\tr\left(\ket{o_l}\bra{o_l}\rho\right) \ket{o_l}\bra{o_l}$, where $\{\ket{o_l}\}_{l\in[d]}$ are the eigenstates of $\mathcal{O}$ (see Lemma 2). The above figure depicts the set inclusion relations, $\mathcal{N}^{0(2)}_C\subsetneq\mathcal{N}^{0(1)}_C\subsetneq\mathcal{N}_C\subsetneq\mathcal{N}_{MP}$. }
\label{figs5}\vspace{-.3cm}
\end{figure}

Given a quantum state $\rho$ and an observable $\mathcal{O}$, we are interested in the variation $\Delta\mathcal{O}^{\mathrm{U}}(\rho):=\max_{u\in\mathrm{U}}[\tr(\mathcal{O}\rho)-\tr(\mathcal{O} u\rho u^\dagger)]$. The state having same entropy as of the state $\rho$ and yielding minimum expectation value for the observable $\mathcal{O}$ is thus relevant in this regard. We call this state the $\mathcal{O}$-passive state.  
\begin{upstatement}{Definition S$1$}\label{def1}
Given an observable $\mathcal{O}\in$Herm$(\mathbb{C}^d)$ state $\rho_p^o\in \mathcal{D}(\mathbb{C}^d)$ is said to be the $\mathcal{O}$-passive state, if 
\begin{align}
\tr(\mathcal{O}\rho_p^o)\leq \tr(\mathcal{O}u\rho_p^o u^\dag),~\forall~u\in \mathrm{U}.
\end{align}
\end{upstatement}

\begin{manualtheorem}{S$1$}\label{theo0}
[Schur-Horn theorem \cite{Schur1923,Horn1954,Marshall2011}] Let $\mathcal{Z}\in$Herm$(\mathbb{C}^d)$ has spectral $\mathbb{R}^d\ni\vec \lambda\equiv(\lambda_1,\cdots,\lambda_d)$ and let $\mathbb{R}^d\ni\vec \mu\equiv(\mu_1,\cdots,\mu_d)$ be the diagonal elements of $\mathcal{Z}$ represented in some other basis, with both the vectors arranged in increasing order. Then $\vec \lambda\succ\vec \mu$.      
\end{manualtheorem}
Using this Theorem we provide a sufficient form for the $\mathcal{O}$-passive state. 
\begin{manuallemma}{S$3$}\label{lem3}
For the observable $\mathcal{O}=\sum_{l\in[d]}\lambda_l\ketbra{o_l}{o_l}$, with $\lambda_1\geq \lambda_2\geq\cdots\geq\lambda_d$, the state $\rho_p^o\in\mathcal{D}(\mathbb{C}^d)$ of the following form is always an $\mathcal{O}$-passive state:
\begin{align}\label{passive1}
\rho_p^o= \sum_{l\in[d]} r_l \ketbra{o_l}{o_l}, \mbox{  where}, ~r_1\leq r_2\leq\cdots r_d.
\end{align}
\end{manuallemma}
\begin{proof}
Let, $\rho_u:=u\rho_p^o u^\dag = \sum_{l\in[d]}r_l\ketbra{l}{l}$, for an arbitrary $u\in\mathrm{U}$. The state $\rho_u$ can also be written in $\{\ketbra{o_l}{o_{l^\prime}}\}$ basis as $\rho_u=\sum_{l,l^\prime\in[d]}s_{ll^\prime}\ketbra{o_l}{o_{l^\prime}}$. The diagonal entries $s_{ll}$ are denoted as $s_l$ for the sake of brevity. Note that the $\vec s_u\equiv(s_1,\cdots,s_d)$ is not arranged in any particular order. However, we can obtain $\rho_{uv}:=v\rho_uv^\dagger$ by apply another unitary $v$ such that diagonal entries $\vec s_{uv}\equiv(s_1^{\uparrow},\cdots,s_d^{\uparrow})$ are arranged in increasing order. Manifestly it follows that
\begin{align}\label{lem3eq01}
\tr(\mathcal{O}\rho_{uv})\le \tr(\mathcal{O}\rho_u).
\end{align}
Applying Schur-Horn theorem between $\rho_p^o$ and $\rho_{uv}$ we  can conclude $\vec r\succ\vec s_{uv}$. Now we have,
\begin{align}\label{lem3eq1}
\tr(\mathcal{O}\rho_p^o)- \tr(\mathcal{O}\rho_{uv})&=\sum_{l\in[d]}\lambda_l\left(r_l-s_l^{\uparrow}\right)\nonumber\\
&= \left(r_1-s_1^{\uparrow}\right)\lambda_1+\Big((r_1+r_2)-(s_1^{\uparrow}+s_2^{\uparrow})\Big)\lambda_2+\cdots+\left(\sum_{l\in[d]}r_l-\sum_{l\in[d]}s_l^{\uparrow}\right)\lambda_d \nonumber \\
&~~-\left\lbrace (r_1-s_1^{\uparrow})\lambda_2+\Big((r_1+r_2)-(s_1^{\uparrow}+s_2^{\uparrow})\Big)\lambda_3+\cdots+\left(\sum_{l\in[d-1]}r_l-\sum_{l\in[d-1]}s_l^{\uparrow}\right)\lambda_d\right\rbrace\nonumber\\
&=\left(r_1-s_1^{\uparrow}\right)\left(\lambda_1-\lambda_2\right)+\Big((r_1+r_2)-(s_1^{\uparrow}+s_2^{\uparrow})\Big)\left(\lambda_2-\lambda_3\right)+\cdots\nonumber\\
&~~~~~~~~~~~+\left(\sum_{l\in[d-1]}r_l-\sum_{l\in[d-1]}s_l^{\uparrow}\right)\left(\lambda_{d-1}-\lambda_d\right)+\left(\sum_{l\in[d]}r_l-\sum_{l\in[d]}s_l^{\uparrow}\right)\lambda_d\nonumber \\
& = \sum_{m=2}^d\left\{\sum_{l\in[l-1]}\left(r_l-s_l^{\uparrow}\right)\right\}\Big(\lambda_{m-1}-\lambda_{m}\Big)+\left(\sum_{l\in[d]}(r_l-s_l^{\uparrow})\right)\lambda_d\leq 0.
\end{align}
The inequality in last line follows from the fact that $\vec r\succ\vec s_{uv}$ and the fact that $\vec \lambda$ is arranged in increasing order. Now, Eq.(\ref{lem3eq01}) and Eq.(\ref{lem3eq1}) together imply $\tr(\mathcal{O}\rho_p^o)\le\tr(\mathcal{O}u\rho_p^ou^\dagger)$. This completes the proof. 

Furthermore, it is important to notice that, when $\mathcal{O}$ is non-degenerate, the form of the state in in Eq.(\ref{passive1}) is the only allowed form of the  $\mathcal{O}$-passive state.
\end{proof}
\begin{manuallemma}{S$4$}\label{lem4}
Given an observable $\mathcal{O}\in$ Herm$(\mathbb{C}^d)$ and a state $\rho\in \mathcal{D}(\mathbb{C}^d)$, we have $\Delta\mathcal{O}^{\mathrm{U}}(\rho)= \langle\mathcal{O}\rangle_\rho-\mathcal{O}^p(\rho)$, where $\mathcal{O}^p:\mathcal{D}(\mathbb{C}^d)\mapsto \mathbb{R}$ is defined as: $\mathcal{O}^p(\rho):= \tr(\mathcal{O}\rho_p^o)$, with $\rho_p^o$ being a $\mathcal{O}$-passive state having the same spectrum as $\rho$.
\end{manuallemma}
\begin{proof} 
Recalling the definition of $\Delta\mathcal{O}^{\mathrm{U}}(\rho)$, we have
\begin{align}
\Delta\mathcal{O}^{\mathrm{U}}(\rho)&=\max_{u\in\mathrm{U}}[\tr(\mathcal{O}\rho)-\tr(\mathcal{O} u\rho u^\dagger)]\nonumber\\
&=\tr(\mathcal{O}\rho)- \min_{u\in\mathrm{U}}\tr(\mathcal{O} u\rho u^\dagger)\nonumber\\
&=\tr(\mathcal{O}\rho)-\tr(\mathcal{O}\rho_p^o).
\end{align}
The last line directly follows from the Definition \ref{def1} of $\mathcal{O}$-passive state. 
\end{proof}
\begin{manuallemma}{S$5$}\label{lem5}
Given a non-degenerate $\mathcal{O}\in$ Herm$(\mathbb{C}^d)$ and a channel $N$ of the form Eq.(\ref{l2b}), if $\mathcal{O}^p(\rho)=\mathcal{O}^p(N(\rho))$, then $\rho$ is a fixed point of $N$, i.e., $N(\rho)=\rho$.
\end{manuallemma}
\begin{proof}
Let, $\vec p=(p_1,p_2,\cdots,p_d)$ be the eigenvalues of $\rho$ arranged in increasing order. Further assume, $\vec \lambda = (\lambda_1,\lambda_2,\cdots,\lambda_d)$ be the eigenvalues of the Hermitian operator $\mathcal{O}$ arranged in decreasing order. It follows from Lemma \ref{lem3} 
\begin{align}
\mathcal{O}^p(\rho)=\vec \lambda \cdot \vec p = \sum_{i\in[d]}\lambda_ip_i.    
\end{align}
Consider now, $\vec q=(q_1,q_2,\cdots,q_d)$ to be the eigenvalues of $N(\rho)$ arranged in increasing order. Therefore, we have
\begin{align}\label{lem5eq1}
\mathcal{O}^p(N(\rho))-\mathcal{O}^p(\rho)&= \sum_{i\in [d]} \lambda_i(q_i-p_i) \nonumber \\
&= \left(q_1-p_1\right)\lambda_1+\Big((q_1+q_2)-(p_1+p_2)\Big)\lambda_2+\cdots+\left(\sum_{i\in[d]}q_i-\sum_{i\in[d]}p_i\right)\lambda_d \nonumber \\
&~~-\left\lbrace (q_1-p_1)\lambda_2+\Big((q_1+q_2)-(p_1+p_2)\Big)\lambda_3+\cdots+\left(\sum_{i\in[d-1]}q_i-\sum_{i\in[d-1]}p_i\right)\lambda_d\right\rbrace\nonumber\\
&=\left(q_1-p_1\right)\left(\lambda_1-\lambda_2\right)+\Big((q_1+q_2)-(p_1+p_2)\Big)\left(\lambda_2-\lambda_3\right)+\cdots\nonumber\\
&~~~~~~~~~~~+\left(\sum_{i\in[d-1]}q_i-\sum_{i\in[d-1]}p_i\right)\left(\lambda_{d-1}-\lambda_d\right)+\left(\sum_{i\in[d]}q_i-\sum_{i\in[d]}p_i\right)\lambda_d\nonumber \\
& = \sum_{l=2}^d\left\{\sum_{i\in[l-1]}\left(q_i-p_i\right)\right\}\Big(\lambda_{l-1}-\lambda_{l}\Big)+\left(\sum_{i\in[d]}(q_i-p_i)\right)\lambda_d\le 0.
\end{align}
Inequality in last line follows from the fact that $N$ is a unital and hence $\vec p$ majorizes $\vec q$, and the fact that $\vec \lambda$ is arranged in decreasing order. 

Now, $\mathcal{O}$ being non-degenerate we have $\lambda_{l-1}-\lambda_{l}> 0,~\forall~l$. Therefore, $\mathcal{O}^p(\rho)=\mathcal{O}^p(N(\rho))$ implies $\vec p =\vec q$. This concludes the proof. 
\end{proof}
\subsection*{Proof of Theorem \textbf{$2$}}
\begin{proof}
(Theorem \textbf{$2$}).\\ 
{\bf if part:} We first show that for every CQ state $\rho_{\bf A}$ of  Eq.(\ref{prop1eq1}), $\mathbb{O}_{G\mathcal{N}_C^0}(\rho_{\bf A})=0$. Notice, for every such $\rho_{\bf A}$, one can always find a unitary $u_1^\star\in\mathrm{U}_1$, such that:
\begin{align}
\sigma_{\bf A}=\Big(u_1^\star \otimes \mathbf{I}_2\Big)\rho_{\bf A}\Big(\left(u_1^\star\right)^\dag\otimes\mathbf{I}_2\Big)=\sum p_{i_1i_2}\ket{o^{i_1}}_1\bra{o^{i_1}} \otimes \psi^{i_2}_{2}.
\end{align}
Clearly, after applying $u_1^\star$ if the channel $N$ as defined in Eq.(\ref{l2b}) is used to send the $A_1$ sub-part to $A_2$, then the state remains unchanged, i.e., $(N\otimes\mathcal{I})\sigma_{\bf A}=\sigma_{\bf A}$. The states $\sigma_{\bf A}$ and $\rho_{\bf A}$ having same spectral yield $\mathcal{O}_{\bf A}^p(\sigma_{\bf A})=\mathcal{O}_{\bf A}^p(\rho_{\bf A})$.  Therefore, under the choice $\{\mathrm{U}_1,\mathrm{U}_2,\mathrm{U}_{12}\} = \{u_1^\star,\mathbf{I}_2,u_{12}^\star\}$, with  $u_{12}^\star$ mapping $\sigma_{\bf A}$ to the passive state having the same spectrum, one have
\begin{align*}
\Delta \mathcal{O}^{\{u_1^\star,\mathbf{I}_2,u_{12}^\star\}}_{N}(\rho_{\bf A}) &= \Delta \mathcal{O}_1^{u_1^\star}(\rho_1)+\Delta \mathcal{O}_2^{\mathbf{I}_2}(\rho_2)+\Delta \mathcal{O}_{\bf A}^{u_{12}^\star}(\sigma_{\bf A})\\
&=\left(\langle\mathcal{O}_1\rangle_{\rho_1}-\langle\mathcal{O}_1\rangle_{\sigma_1}\right)+\left(\langle\mathcal{O}_2\rangle_{\rho_2}-\langle\mathcal{O}_2\rangle_{\sigma_2}\right)+\Delta \mathcal{O}_{\bf A}^{u_{12}^\star}(\sigma_{\bf A})~~\mbox{[Since $\rho_2=\sigma_2$]}\\
&=\left(\langle\mathcal{O}_1\rangle_{\rho_1}+\langle\mathcal{O}_2\rangle_{\rho_2}\right)-\left(\langle\mathcal{O}_1\rangle_{\sigma_1}+\langle\mathcal{O}_2\rangle_{\sigma_2}\right)+\Delta \mathcal{O}_{\bf A}^{u_{12}^\star}(\sigma_{\bf A})\\
&= \Big(\langle\mathcal{O}_{\bf A}\rangle_{\rho_{\bf A}}-\langle\mathcal{O}_{\bf A}\rangle_{\sigma_{\bf A}}\Big)+\left(\langle\mathcal{O}_{\bf A}\rangle_{\sigma_{\bf A}}-\mathcal{O}_{\bf A}^p\left(\rho_{\bf A}\right)\right)~~\mbox{[using Lemma \ref{lem4}]}\\
& =\langle\mathcal{O}_{\bf A}\rangle_{\rho_{\bf A}}-\mathcal{O}_{\bf A}^p\left(\rho_{\bf A}\right) = \Delta \mathcal{O}_{\bf A}^{\mathrm{U}_{\bf A}}(\rho_{\bf A}) ~~\mbox{[using Lemma \ref{lem4}]}.
\end{align*}
Hence, the choice $\{u_1^\star,\mathbf{I}_2,u_{12}^\star\}$ being optimal yields $\mathbb{O}_{G\mathcal{N}^0_C}(\rho_{\bf A})=0$.\\\\
{\bf only if part:} Given a state $\chi_{\bf A}\in\mathcal{D}(\mathbb{C}^{d_1}\otimes\mathbb{C}^{d_2})$, let $\{v_1,v_2,v_{12}\}$ be the optimal choice for evaluating $\mathbb{O}_{G\mathcal{N}^0_C}(\chi_{\bf A})$.
Furthermore, let us assume, $\eta_{\bf A} = (v_1\otimes v_2)\chi_{\bf A}(v_1^\dag \otimes v_2^\dag)$. Thereafter when $A_1$ sub-part is send through the channel $N$ of the form Eq.(\ref{l2b}), the state becomes $(N\otimes\mathcal{I})\eta_{\bf A}=\tilde \eta_{\bf A}$ (say). Thus we have,
\begin{align}
\Delta \mathcal{O}^{\{\mathrm{U}_1,\mathrm{U}_2,\mathrm{U}_{12}\}}_{\mathcal{N}^0_C}(\chi_{\bf A})&=\Delta \mathcal{O}^{\{v_1,v_2,v_{12}\}}_{N}(\chi_{\bf A})\nonumber\\
&= \Big(\langle\mathcal{O}_{\bf A}\rangle_{\chi_{\bf A}}-\langle\mathcal{O}_{\bf A}\rangle_{\eta_{\bf A}}\Big)+\left(\langle\mathcal{O}_{\bf A}\rangle_{\tilde \eta_{\bf A}}-\mathcal{O}_{\bf A}^p\left(\tilde \eta_{\bf A}\right)\right)\nonumber\\
&=\langle\mathcal{O}_{\bf A}\rangle_{\chi_{\bf A}}-\mathcal{O}_{\bf A}^p\left(\tilde \eta_{\bf A}\right).\label{oif1}
\end{align}
In the last line we use the fact that $\mathcal{O}_{\bf A}(\zeta_{\bf A}) = \mathcal{O}_{\bf A}((N\otimes\mathcal{I})\zeta_{\bf A}),~\forall~ \zeta_{\bf A}\in\mathcal{D}(\mathbb{C}^{d_1}\otimes\mathbb{C}^{d_2})$. Now, using Eq.(\ref{oif1}) we obtain
\begin{align}\label{del}
\mathbb{O}_{G\mathcal{N}^0_C}(\chi_{\bf A})&=\Delta \mathcal{O_{\bf A}}(\chi_{A}) - \Delta \mathcal{O}^{\{\mathrm{U}_1,\mathrm{U}_2,\mathrm{U}_{12}\}}_{\mathcal{N}^0_C}(\chi_{\bf A}) \nonumber \\
&=\left(\mathcal{O}_{\bf A}\left(\chi_{\bf A}\right)-\mathcal{O}^p_{\bf A}\left(\chi_{\bf A}\right)\right)-\left(\mathcal{O}_{\bf A}\left(\chi_{\bf A}\right)-\mathcal{O}_{\bf A}^p\left(\tilde \eta_{\bf A}\right)\right)\nonumber\\
&=\mathcal{O}_{\bf A}^p\left(\tilde \eta_{\bf A}\right)- \mathcal{O}_{\bf A}^p\left(\chi_{\bf A}\right)\nonumber\\
&=\mathcal{O}_{\bf A}^p\left(\tilde \eta_{\bf A}\right)- \mathcal{O}_{\bf A}^p\left(\eta_{\bf A}\right)~~\mbox{[Since $\mathcal{O}^p(\rho)=\mathcal{O}^p(u\rho u^\dagger)$]}.
\end{align}
Therefore, $ \mathbb{O}_{G\mathcal{N}^0_C}(\chi_{\bf A})=0$ implies $\mathcal{O}_{\bf A}^p\left(\tilde \eta_{\bf A}\right)= \mathcal{O}_{\bf A}^p\left(\eta_{\bf A}\right)$. Using Lemma \ref{lem5}, we can further claim that, $\eta_{\bf A}$ is a fixed point of $N$, i.e,
\begin{align}
\eta_{\bf A}&= \left(N\otimes\mathcal{I}\right)\eta_{\bf A}\nonumber\\
&=\sum_{l\in [d]}\tr\left(\left(\ket{o^l}_1\bra{o^l}\right)\eta_{\bf A}\right)\ket{o^l}_1\bra{o^l}\otimes\eta_{A_2|l}\nonumber\\
\Rightarrow (v_1\otimes v_2)\chi_{\bf A}(v_1^\dag \otimes v_2^\dag)&= \sum_{l\in [d]}\tr\left(\left(\ket{o^l}_1\bra{o^l}\right)\eta_{\bf A}\right)\ket{o^l}_1\bra{o^l}\otimes\eta_{A_2|l}\nonumber\\
\Rightarrow \chi_{\bf A} &= (v_1^\dag \otimes v_2^\dag)\underbrace{\left(\sum_{l\in [d]}\tr\left(\left(\ket{o^l}_1\bra{o^l}\right)\eta_{\bf A}\right)\ket{o^l}_1\bra{o^l}\otimes\eta_{A_2|l}\right)}_{\mbox{CQ state}}(v_1\otimes v_2),
\end{align}
where $\eta_{A_2|l}:=\tr_{1}\left[\left(\ket{o^l}_1\bra{o^l}\right)\eta_{\bf A}\right]/\tr\left[\left(\ket{o^l}_1\bra{o^l}\right)\eta_{\bf A}\right]$ represents the normalized state of subpart $A_2$, conditioned that the projector $\ket{o^l}_1\bra{o^l}$ acts on part $A_1$. Since the set of CQ states is invariant under local unitary operations and $\eta_{\bf A}$ is a CQ state, therefore $\chi_{\bf A}$ is also a CQ state. This completes the proof.
\end{proof}




\twocolumngrid

\bibliography{Bibliography}

\end{document}